\title{Topological Offsets}
\newtheorem{theorem}{Theorem}
\newtheorem{lemma}{Lemma}
\newtheorem{definition}{Definition}
\newcommand{\cS}{\mathcal{S}}
\def\M{M} %
\definecolor{teseoCol}{rgb}{.15, .68, .38}
\begin{document}

\acmSubmissionID{348}

\author{Daniel Zint}
\email{daniel.zint@nyu.edu}
\orcid{0000-0003-4491-1685}
\affiliation{%
  \institution{New York University}
  \country{USA}
}

\author{Zhouyuan Chen}
\email{zc2952@nyu.edu}
\orcid{0009-0007-6493-8734}
\affiliation{%
  \institution{New York University}
  \country{USA}
}

\author{Yifei Zhu}
\email{yz6994@nyu.edu}
\orcid{0009-0008-5081-3188}
\affiliation{%
	\institution{New York University}
	\country{USA}
}

\author{Denis Zorin}
\email{dzorin@cs.nyu.edu}
\orcid{0000-0001-7733-5501}
\affiliation{%
  \institution{New York University}
  \country{USA}
}

\author{Teseo Schneider}
\email{teseoch@uvic.ca}
\orcid{0000-0002-5969-636X}
\affiliation{%
  \institution{University of Victoria}
  \country{Canada}
}

\author{Daniele Panozzo}
\email{panozzo@nyu.edu}
\orcid{0000-0003-1183-2454}
\affiliation{%
  \institution{New York University}
  \country{USA}
}

\begin{abstract}
We introduce \emph{Topological Offsets}, a novel approach to generate manifold and self-intersection-free offset surfaces that are topologically equivalent to an offset infinitesimally close to the surface.
Our approach, by construction, creates a manifold, watertight, and self-intersection-free offset surface strictly enclosing the input, while doing a best effort to move it to a prescribed distance from the input. Differently from existing approaches, we embed the input in a background mesh and insert a topological offset around the input with purely combinatorial operations. The topological offset is then inflated/deflated to match the user-prescribed distance while enforcing that no intersections or non-manifold configurations are introduced. 

We evaluate the effectiveness and robustness of our approach on the Thingi10k dataset,
and show that topological offsets are beneficial in multiple graphics applications, including (1) converting non-manifold surfaces to manifold ones, (2) creating layered offsets, and (3) reliably computing finite offsets.
    
\end{abstract}

\keywords{offsets, meshing}

\begin{CCSXML}
<ccs2012>
   <concept>
       <concept_id>10010147.10010371.10010396</concept_id>
       <concept_desc>Computing methodologies~Shape modeling</concept_desc>
       <concept_significance>300</concept_significance>
       </concept>
 </ccs2012>
\end{CCSXML}
\ccsdesc[300]{Computing methodologies~Shape modeling}

\begin{teaserfigure}
    \centering
    \includegraphics[width=\linewidth]{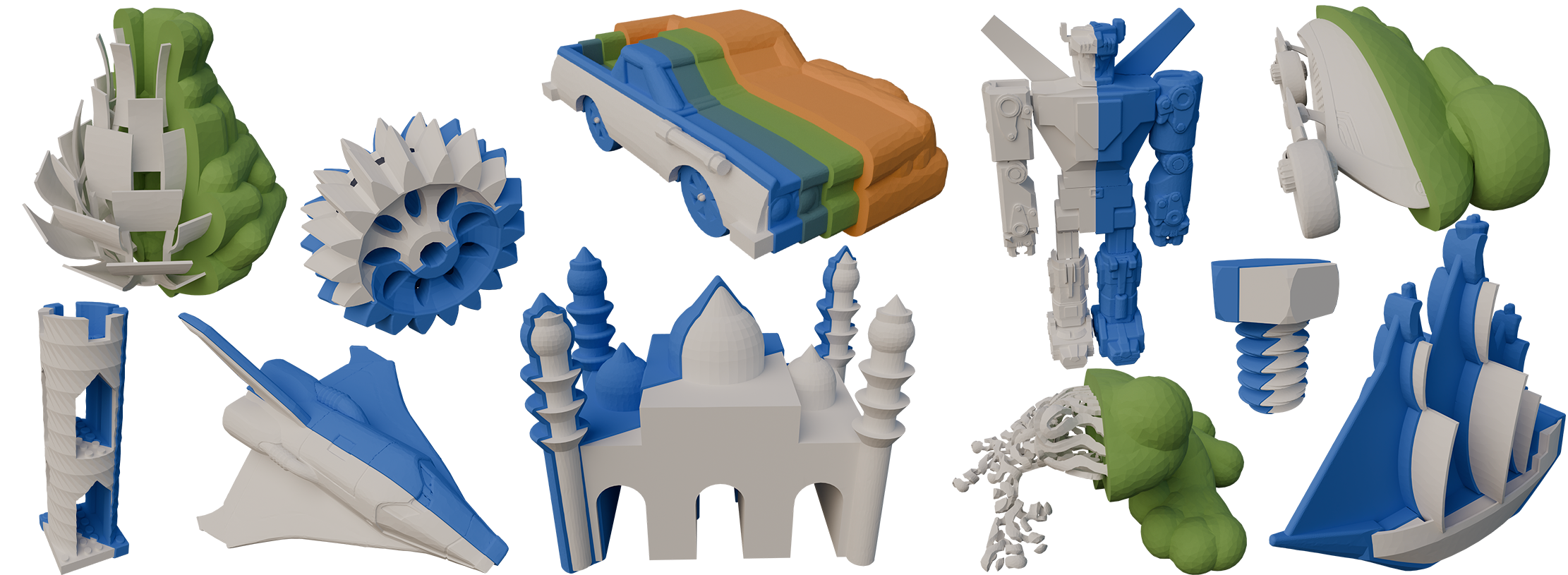}
    \caption{A collection of offsets computed from models in the Thingi10k dataset. Our algorithm provably computes manifold, watertight, and self-intersection-free offsets homeomorphic to an infinitesimally small offset (blue) and, with small changes, can also produce traditional offsets (green), and multiple non-intersecting layers (multicolor).}
    \label{fig:teaser}
\end{teaserfigure}

\maketitle

\section{Introduction}
\label{sec:intro}

Surface offsets, i.e., the regions at a fixed distance from the input surface, are a fundamental modeling tool in graphics and CAD. They are used for designing shapes, computing collision clearances for manufacturing, morphological operators, collision proxies, boundary layers, nested cages, and many more.

Despite their simple definition, their computation is still an unsolved challenge. While exact computation in 2D is possible, it is still an open problem in 3D. This led to a plethora of algorithms computing approximated versions of offsets: unfortunately, they all lose crucial properties such as lack of self-intersections, topological correctness, and geometrical precision and are often restricted to manifold and non-self-intersecting inputs (Section \ref{sec:related}).

Robust and accurate algorithms exist for the special case of offsets with infinite radius, which have the topology of a sphere: shrink-wrapping algorithms deform that infinite offset until it tightly wraps a shape. This is useful for many applications, especially in 3D printing, as it provides a reliable way to inflate shapes with zero thickness or to topologically repair broken meshes. However, it uncontrollably loses the internal holes and handles.

We consider the reciprocal case: we introduce offsets with an infinitesimally small distance from the input and allow them to expand. We call the resulting surface a \emph{topological offset} (\Cref{fig:teaser}, blue). 

We observe and prove that the topology of such an offset is unique and only depends on the input surface. This contrasts traditional offsets, which we will call \emph{finite offsets} from now on (\Cref{fig:teaser}, green), where the topology depends on the offset distance. Interestingly, these offsets are also uniquely defined for non-manifold, non-orientable, and self-intersecting meshes (\Cref{fig:open-non-manifold-self-intersecting-examples}).

We formally define topological offsets (\Cref{sec:continuos_offset}) and introduce a purely topological (and thus unconditionally robust) algorithm to compute it from a tetrahedral background mesh containing the input (\Cref{sec:discrete_topological_offset}). The geometry of the resulting offset is then optimized using a combination of local operations and interior point optimization, with a set of conservative topological and geometrical predicates (accounting for floating point rounding errors) to ensure that the resulting offset keeps the same topology and does not contain self-intersections. 

Our construction is guaranteed to produce offsets with the following properties: (1) no intersections, (2) manifold, (3) same topology as an infinitesimal offset, and (4) strictly enclosing/containing the input (\Cref{fig:fertility_topo_vs_finite}).

\begin{figure}
    \centering\footnotesize
    \includegraphics[width=0.3\linewidth]{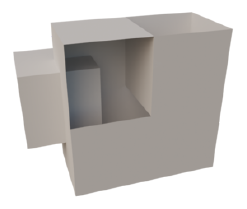}\hfill
    \includegraphics[width=0.3\linewidth]{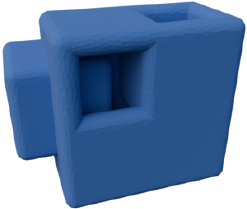}\hfill
    \includegraphics[width=0.3\linewidth]{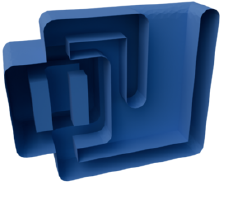}
    \parbox{0.32\linewidth}{\centering input}\hfill
    \parbox{0.64\linewidth}{\centering topological offset}
    \caption{Our method generates manifold meshes with topological guarantees even if the input (left) is open, non-manifold, non-orientable, and self-intersecting. The right image is a cut-view of the offset.}
    \label{fig:open-non-manifold-self-intersecting-examples}
\end{figure}

\begin{figure}
    \centering\footnotesize
    \includegraphics[width=0.32\linewidth]{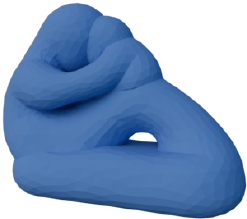}\hfill
    \includegraphics[width=0.32\linewidth]{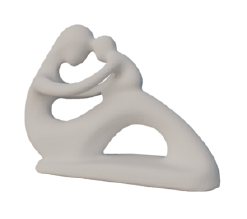}\hfill
    \includegraphics[width=0.32\linewidth]{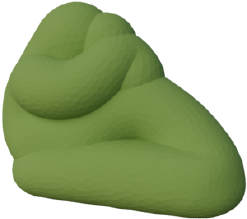}
    \parbox{0.32\linewidth}{\centering topological offset}\hfill
    \parbox{0.32\linewidth}{\centering input}\hfill
    \parbox{0.32\linewidth}{\centering finite offset}
    \caption{Topological offsets (blue) always have the same topology for any given input (white), while the topology of finite offsets (green) depend on the offset distance}
    \label{fig:fertility_topo_vs_finite}
\end{figure}

\begin{wrapfigure}{r}{0.15\linewidth}
    \includegraphics[width=1\linewidth,trim={0.8cm 0.5cm 0cm 0.5cm}]{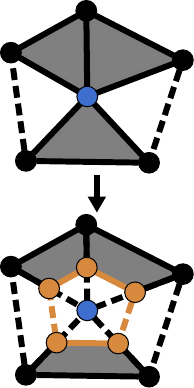}
\end{wrapfigure}
Topological offsets enjoy wide practical applicability (Section \ref{sec:results}): 
Non-manifold simplices can be removed by generating a topological offset around them and tagging all simplices in the offset region as outside (see inset). Self-intersections can be repaired by replacing a self-intersecting surface with its topological offset, which is always intersection-free. 
With a minor modification, topological offsets become finite ones while enjoying the same robustness, ensuring manifoldness and lack of self-intersections for any offset distance. 

We compare our modified version that produces finite offsets against the state-of-the-art (finite) offset method \cite{zint2023feature}: while being robust, this approach produces intersecting offsets on about 5\% of the models, hindering their usability in downstream applications. We also analyze our topological offsets on a large data set but we are not aware of any method that produces similar results that we could compare against.

Our major contributions are:
\begin{enumerate}
    \item We formally define \emph{topological offsets}.
    \item We introduce a robust algorithm to compute them and control their geometry.
    \item We show that they can be converted into \emph{finite offsets}, while ensuring manifoldness and lack of intersection.
    \item We produce layered offsets that strictly enclose each other by repeatedly adding topological or finite offsets.
    \item We introduce (topological and finite) offsets with spatially varying distances.
    \item We apply the topological offset to remove non-manifold edges and vertices while keeping the topological and geometrical changes minimal.
    \item 
    We provide an open-source reference implementation\footnote{https://github.com/wildmeshing/topological-offsets} to ensure the reproduction of our results and foster the adoption of this new offset type.
\end{enumerate}

\section{Related Work}
\label{sec:related}

\paragraph{Voronoi Diagram.}

Generating an exact offset can be considered a sub-problem of generating a generalized Voronoi diagram. While this approach is feasible in 2D, with robust algorithms for computing 2D Voronoi diagrams \cite{cgal:k-sdg2-22b}, the reliable generation of such a diagram in 3D is still an open problem~\cite{yap2012towards,boada2008approximations}. Hemmer et al. compute the exact Voronoi diagram for arbitrary lines in 3D~\cite{hemmer2010constructing}. Aubry et al. use the generalized spherical Voronoi diagram around vertices (which is 2D and therefore easier to compute) to extract boundary layers ~\cite{aubry2017boundary}. However, this method cannot handle self-intersections of the boundary layer, limiting its use to shapes where two non-adjacent elements are further away than twice the offset distance.

\paragraph{Discrete Offset and Morphological Operations}

A volume can be discretized as a collection of voxels, and a discrete offset can be defined using morphological operations \cite{suriyababu2023towards}, which are robust and efficient. However, using a grid (uniform or adaptive) introduces staircase artifacts and inherently limits feature size, as the memory and computation cost to store the data is high, even when Dexel data structures are used \cite{chen2019half}. Other methods sidestep that issue by performing morphological operations on point clouds \cite{calderon2014point}. However, the conversion into a point cloud induces a sampling error. \cite{sellan2020opening} proposes an approach for performing opening and closing operations on meshes without performing dilation and erosion explicitly.
It does not generate offsets. 

\paragraph{Approximate Distance Offsets}

A popular compromise between \emph{exact} and \emph{discrete} offsets is the use of uniform grids, adaptive grids, or particles to discretize a distance field from the input surface \cite{qu2004feature,liu2010fast,pavic2008high,zint2023feature,wang2013gpu,meng2018efficiently} and extract the offset as an isosurface \cite{lorensen1987marching,ju2002dual}. After the surface is extracted, a remeshing procedure is applied to improve mesh quality, reduce element count, and remove self-intersections \cite{botsch2004remeshing}.
The advantage of these approaches is their efficiency in computation cost and memory, as they rely on decades of work on isosurface extraction to get the initial offset with high resolution, and they perform the expensive mesh optimization only on the resulting, possibly adaptive, surface. However, similarly to discrete methods, they cannot guarantee that the extracted offset is homeomorphic to the exact offset. 
Additionally, post-processing the extracted surface might introduce self-intersections \cite{zint2023feature}.
Another approach is to construct linear approximations for each input triangle per cell and compute plane intersections \cite{Wang2024pco}. This approximation is sufficient to produce sharp features in convex regions but the offset might intersect the input for small offset distances.

Our approach computes an offset guaranteed to be manifold, intersection-free, and with the unique topology of an infinitesimal offset. Furthermore, our method can also compute finite offsets and still enjoys the guarantee of producing a manifold and intersection-free output.

\paragraph{Approximate Minkowski Sums}

An alternative approximation is to define the offset as the Minkowski sum of an input surface with a discretized sphere \cite{varadhan2004accurate,campen2010exact,campen2010polygonal,martinez2015chained}. These methods create very dense meshes when using an accurate discretization of the sphere, making them a good fit only for applications where a coarse approximation of the offset is sufficient (see Figure 5 of \cite{Wang2020} for an analysis of memory and time used by this type of algorithms). 

Contrarily, the offset distance for topological offsets is adaptive while keeping a small distance error for finite offsets (\Cref{sec:results}).

\paragraph{Shrink-Wrapping}

Shrink-wrapping algorithms \cite{Kobbelt1999,Lee2009,huang2020manifoldplus,Martineau2016ANIF,Jureti2011,suriyababu2023towards} shrink an infinite offset containing the full mesh until it tightly fits the mesh. This approach can be used to repair meshes reliably \cite{Stuart2013,Portaneri2022,Dai2024feature} and it is a popular method to prepare models for 3D printing. The topology of the resulting surface, by construction, ignores internal holes: while this is a desirable property for mesh repair, the resulting surfaces are not offsets.

Our approach builds upon ideas in \cite{Portaneri2022} to use a tetrahedral background mesh, but uses it to build an infinitesimal offset instead.

\paragraph{Enclosing Volumes and Boundary Layers}

The closest works to ours are algorithms to construct enclosing volumes around an input, which are often used for animation cages, shell maps, or interface tracking in graphics \cite{Sacht2015,Jiang2020, porumbescu2005shell,Brodersen:2008,misztal2012topology}, or boundary layers in engineering simulation \cite{loseille2013robust,Garimella2000}. These methods rely on a displacement in the normal direction (either directly, or via a geometric flow) which is not well-defined, in general (\Cref{fig:boxes}). 

\begin{figure}
    \centering\footnotesize
    \includegraphics[width=.49\linewidth]{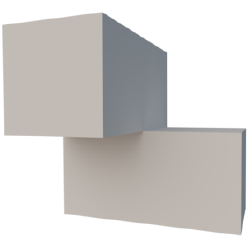}\hfill
    \includegraphics[width=.49\linewidth]{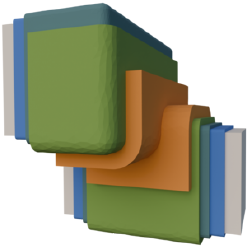}
    \parbox{0.32\linewidth}{\centering input}\hfill
    \parbox{0.32\linewidth}{\centering offset layers}
    \caption{The corner between the two boxes does not have a unique vertex normal, i.e., there is no direction in which the vertex can be offset without intersecting the input. Our method does not rely on vertex normals and can, therefore, handle this case properly. See \cite{Jiang2020} for a more detailed discussion.}
    \label{fig:boxes}
\end{figure}

Instead of using normals, other approaches construct cages from volumetric representations of the input. \cite{calderon2017bounding} rely on voxel grids, that are known to be fast and efficient, but they are not guaranteed to capture the input topology correctly, as input surfaces can be arbitrarily close.
A robust method for constructing cages is presented in \cite{guo2024robust}. The method requires closed and manifold meshes without self-intersections as input, and constructs cages that are guaranteed to be homeomorphic to the input. Like our method, it computes the cage topology using a tetrahedral embedding of the input. However, the tetrahedral mesh is uniformly refined and eventually discarded and therefore no longer available for downstream applications. In contrast, our method keeps the tetrahedral mesh and ensures that it stays inversion-free. Additionally, we only perform local subdivisions instead of a uniform refinement, resulting in a much coarser initial mesh (\Cref{fig:robust-cages}). Finally, our method can also handle open, non-manifold, and non-orientable input with self-intersections (\Cref{fig:open-non-manifold-self-intersecting-examples}).

\begin{figure}
    \centering\footnotesize
    \includegraphics[width=\linewidth]{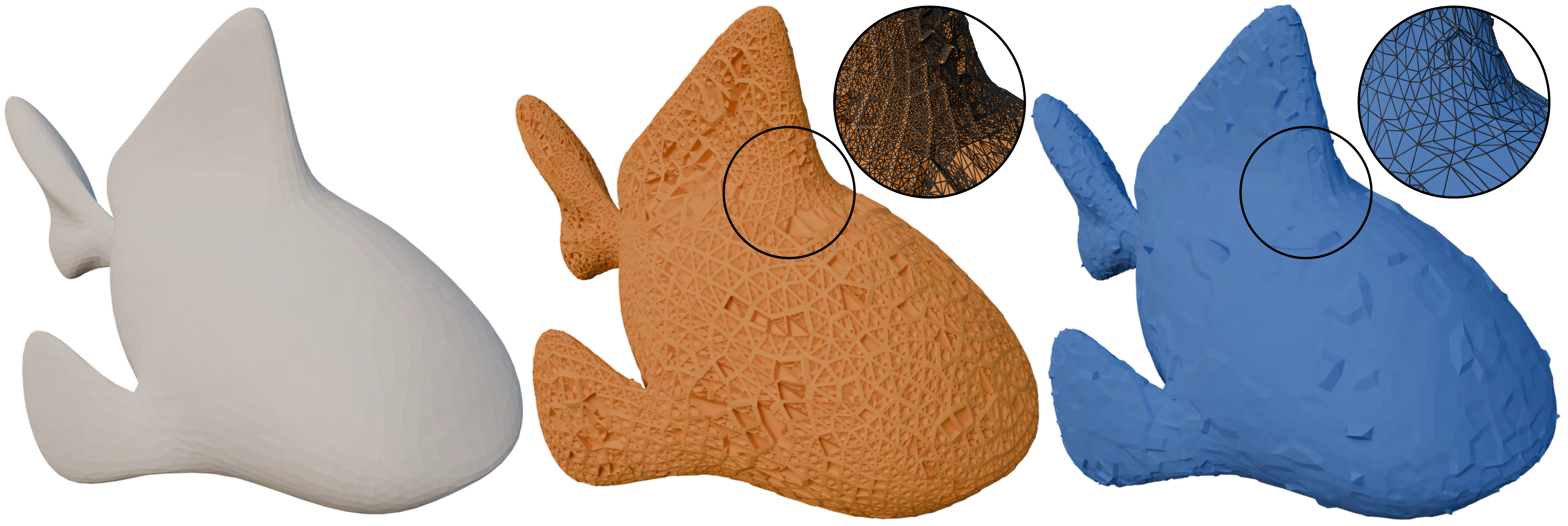}
    \parbox{0.3\linewidth}{\centering input, $\#t = 7\,238$}\hfill
    \parbox{0.3\linewidth}{\centering \cite{guo2024robust}, $\#t = 417\,790$}\hfill
    \parbox{0.3\linewidth}{\centering ours, $\#t = 23\,726$}
    \caption{Our method utilizes local subdivisions instead of a uniform refinement, leading to merely a fraction of triangles. Both, our method and \cite{guo2024robust}, are guaranteed to construct an offset that is homeomorphic to a closed and manifold input mesh without self-intersections.}
    \label{fig:robust-cages}
\end{figure}

\paragraph{Multi-Material Remeshing}

Our method processes a tetrahedral background mesh whose faces represent the input geometry: this is useful to avoid self-intersections in the offset without requiring explicit collision checks. Our algorithm uses a multi-material mesh optimization similar to \cite{FTB:2016:MVR,cgal:tftb-tr-23b}, relying on the multi-material link condition proposed in \cite{Thomas2011}. We provide more details on our remeshing algorithm in \Cref{sec:optimization}.

\section{Overview}

We introduce an algorithm for creating topological offsets. Unlike many other offset methods, we rely on a volumetric meshing algorithm to embed the input into a tetrahedral background mesh before processing: this increases our running time but provides strong guarantees on topology, lack of self-intersections, and termination. Our guarantees hold even when our algorithm is implemented using floating point arithmetic (Section \ref{sec:robustness}). 

\paragraph{Input} The input to our algorithm is a simplicial complex (e.g., a non-manifold triangular surface) embedded within a manifold tetrahedral background mesh embedded in $\mathbb{R}^3$ without inverted elements, plus an offset distance $\delta$. The manifoldness condition is only required on the background mesh, not on the input. Additionally, we require that the simplicial complex 
is in the interior of the tetrahedral background mesh, i.e., there is a layer of tetrahedra completely enclosing the input.

\paragraph{Output} The output of our algorithm is a tetrahedral mesh with the offset (and input) embedded. The embedded triangle mesh of the offset is guaranteed to be manifold, intersection-free, and watertight. Our algorithm strives to improve this mesh's quality and keep a distance $\delta$ from the input (Section \ref{sec:results}). We note that the tetrahedral mesh is useful in many downstream applications and is easy to discard if unnecessary.

\paragraph{Summary}
We first introduce the theoretical concept of topological offsets and prove that their topology is unique for a sufficiently small $\epsilon$ (Section \ref{sec:continuos_offset}). This proof requires us to define a locality condition that strongly relies on a specific type of mesh embedding that we call simplicial embedding~(\Cref{def:simpembedding}).
We then propose an algorithm to compute them and optionally improve their quality (\Cref{sec:discrete_topological_offset}). Crucially, our construction \emph{does not require selecting a sufficiently small $\epsilon$}: it is purely topological and provably produces an offset homeomorphic to an offset with infinitesimally small $\epsilon$ (\Cref{thm:homeomorphic}).

\section{Infinitesimal and Topological Offset}
\label{sec:continuos_offset}

\begin{figure}
    \centering
    \includegraphics[width=1\columnwidth]{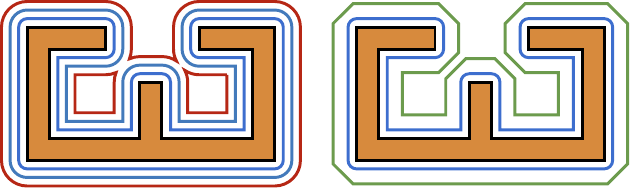}
    \caption{An infinitesimal offset family (blue curves). If $\epsilon$ becomes too large, the offset changes topology (red curve). We are interested in a topological offset (green curve) that admits a continuous bijective map to the infinitesimal one.}
    \label{fig:smooth}
\end{figure}

Consider a smooth, simple, manifold, closed surface $C$ with a single connected component, and bounded curvature embedded in a domain $\Omega$. A finite offset $\mathcal{O}(C,\epsilon)$ is the set of points at a distance $\epsilon > 0$ from the surface,
$$\mathcal{O}(C,\epsilon) = \{x \in \Omega  \mid \|c(x) - x\|_2 = \epsilon\},$$
where $c(x)$ is the function returning the point of $C$ closest to the point $x$. We can always find an $\epsilon_C$ such that the offset does not intersect the medial axis of $C$, and thus, for such an $\epsilon_C$, the function $c(x)$ is bijective. By repeating the same construction for a simplicial complex $S$ (\Cref{fig:smooth}), an $\epsilon$ for which the function $c$ is bijective does not exist anymore since the medial axis extends to the sharp points of $S$. However, as we will show in the following, there is still an $\hat{\epsilon}$ such that all offsets with a distance smaller than $\hat{\epsilon}$ have the same topology, are manifold, and are free of self-intersections.

\begin{definition}[$\hat{\epsilon}$ Infinitesimal Offset Family]
 We call $\bar{\mathcal{O}}(S,\hat{\epsilon})$ the infinitesimal offset family of a simplicial complex $S$ 
 the 1-parameter family of offsets parametrized by a distance parameter $\epsilon \in (0,\hat{\epsilon})$. 
\end{definition}

To streamline the explanation, we denote with $\mathcal{S}_\epsilon(s)$ the $\epsilon$-inflation of the simplex $s$, i.e., the Minkowski sum of $s$ and the ball of radius $\epsilon > 0$. Similarly, we use $\mathcal{S_\epsilon}(S)$ to denote the $\epsilon$-inflation of all simplices in~$S$.

\begin{theorem}[Infinitesimal Offsets]
For a simplicial complex $S$, there is an $\hat{\epsilon} > 0$ such that all offset surfaces $\partial \mathcal{S_\epsilon}(S), 0 < \epsilon \leq \hat{\epsilon}$ are manifold surfaces, assuming that faces of $S$ are in a general position.
\label{thm:infinitesimal}
\end{theorem}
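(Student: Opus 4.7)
The plan is to reduce manifoldness of $\partial \mathcal{S}_\epsilon(S)$ to a local question at each simplex of $S$, exploit the finiteness of $S$ to pick a uniform $\hat{\epsilon}$, and use a tangent-cone scaling argument at vertices.

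First I would choose $\hat{\epsilon}$ smaller than one-third of the minimum positive distance between any two non-incident closed simplices of $S$, and smaller than a small fraction of the shortest edge length of $S$. Because $S$ has finitely many simplices, such an $\hat{\epsilon}>0$ exists. For every point $p \in S$, this choice guarantees that $\mathcal{S}_\epsilon(S) \cap B(p, 2\hat{\epsilon})$ is the $\epsilon$-inflation of only those simplices of $S$ incident to the simplex containing $p$, so manifoldness can be decided locally at each stratum of $S$.

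Next I would case-split on which stratum a boundary point projects to. In the relative interior of a triangle, $\partial \mathcal{S}_\epsilon(S)$ is locally two parallel translates of the triangle, hence smooth. Near the relative interior of an edge $e$, the local simplicial structure is $e$ together with its incident triangles meeting as half-planes along $e$; the inflation is their Minkowski sum with a ball, and general position (no two incident triangles coplanar along $e$) makes the pieces of its boundary meet transversally along curves parallel to $e$, producing a 2-manifold. At a vertex $v$, I would rescale by $1/\epsilon$; the rescaled inflation equals $\mathcal{S}_1(T_v S)$ for the tangent cone $T_v S$ at $v$, so it suffices to prove that $\partial \mathcal{S}_1(T_v S)$ is a manifold.

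The main obstacle will be this last case, since a vertex may be incident to many simplices in non-manifold configurations. I would attack it by examining the link on a sphere of radius $\rho > 1$ centered at $v$: its intersection with $\partial \mathcal{S}_1(T_v S)$ is a 1-dimensional subset of the sphere built from circular arcs (traces of edge-tubes) and arcs bounding triangle-prisms. Under general position these arcs intersect transversally on the sphere, so the link is a 1-manifold; a standard coning argument then promotes manifoldness of the link to manifoldness of $\partial \mathcal{S}_1(T_v S)$ in a neighborhood of $v$. Assembling the local statements over the finite set of simplices and taking the minimum of the corresponding local $\hat{\epsilon}$ values completes the proof.
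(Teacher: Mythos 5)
Your strategy---localize with a minimum-separation constant, then analyze each stratum, reducing the vertex case to the unit inflation of the tangent cone---is a genuinely different route from the paper's. The paper writes the offset as a level set of $\min_i d_i$ (the distances to the individual convex simplices), notes this function is piecewise $C^1$ and Lipschitz, and applies Clarke's nonsmooth implicit function theorem: the level set is a Lipschitz manifold at $p$ provided the convex hull of the active normals (the generalized differential) can be complemented to an invertible $3\times 3$ matrix, which is checked directly for one, two, or three independent active normals. Your triangle-interior and edge cases are fine and correspond to the one- and two-normal cases there.

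The vertex case, however, has a genuine gap, and it is the crux of the theorem. The surface $\partial \mathcal{S}_1(T_v S)$ is \emph{not} a cone over its link: the unit inflation of a cone is not a cone (it contains a patch of the unit sphere $\partial B(v,1)$ over the directions at angular distance at least $\pi/2$ from the link of $v$, glued to rounded tube and slab pieces), so ``a standard coning argument'' does not apply. Moreover, your link taken on a sphere of radius $\rho > 1$ never meets that apex-sphere patch at all, and it says nothing about the compact region at radius between $1$ and roughly $2$ where the apex sphere, the edge-tubes, and the triangle-slabs actually interact---which is precisely where non-manifold points would arise. Your argument therefore establishes the structure of $\partial\mathcal{S}_1(T_v S)$ only far from the apex and leaves the genuinely hard bounded part unproved; filling it essentially requires the paper's analysis of the convex hull of active normals (or an equivalent). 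Relatedly, you never exclude tangential contacts between two offset patches meeting with opposite normals, a failure mode under which the surface is not a manifold; general position of the faces alone does not obviously rule this out, and the paper handles it by additionally taking $\epsilon$ generic (``almost all values of $\epsilon$''). A smaller issue: your locality claim that only simplices ``incident to the simplex containing $p$'' contribute inside $B(p,2\hat\epsilon)$ fails when $p$ lies near the boundary of its carrier (simplices meeting that boundary but not containing the carrier also contribute); this is fixable by working with closed stars, as the paper's locality lemma does via its convex-cell decomposition.
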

\begin{proof}

The distance function to a convex set is $C_1$ \cite{rockafellar2009variational}.  The distance function to a union of a finite number of convex sets is $\min( d_0 \ldots d_n)$  where $d_i$ is the distance to $i$-th convex set, so clearly piecewise $C_1$ and Lipschitz, as $\min$ is Lipschitz and the composition preserves the Lipschitz property. The offsets are the level sets of this function. 
 
The specialization of Clarke's theorem \cite{Borwein2006} to the scalar functions  $f:  \mathbb{R}^3 \rightarrow \mathbb{R}$
provides a criterion when the resulting offset is a manifold.   Let $U \in \mathbb{R}^3$ be an open neighborhood of a point $p = (x,y,z)$, and let $c = f(p)$.   If there is a matrix $2 \times 3$ matrix $B$, such that for any $A \in \partial f(p)$  (the generalized differential of $f$) The matrix  $\left[\begin{array}{c}A\\ B
\end{array}\right]$ is an invertible $3 \times 3$ matrix,  then the level set  $c = f(p)$ is locally a Lipschitz submanifold.

The generalized differential is the convex hull of gradients of all distance functions "active" at $p$. Geometrically, this corresponds to the convex hull of the normal vectors to all simplex offset surfaces intersecting at $p$.

We do not need to consider a larger number of vectors in the generalized differential, as such points,  if they exist, can be eliminated by a small change in the offset $\epsilon$.
In a general position (i.e., for almost all values of $\epsilon$ and general position of faces of $S$), we can assume that the level-set surfaces of each face intersect transversally,
 except if the point is on a patch shared by two or more such surfaces (e.g., a part of the sphere centered at a common vertex of two triangles), which we exclude from consideration as all offsets share a normal in this case. In other words, two vectors, in the case of intersection of two surfaces and 3 vectors, in the case of three surfaces, are linearly independent.

Note that if two vectors in the generalized differential are linearly dependent and point in opposite directions (tangential contact), then a zero vector is in the convex hull and the matrix $B$ required by the theorem does not exist, but such contact can be removed by a small perturbation of $\epsilon$.

The statement of Clarke's theorem requires finding two vectors,  $v_1$, and $v_2$ (the rows of $B$) such that for any $w \in \partial f$,  $[v_1, v_2, w]$ are linearly independent.    If there is only one vector in $\partial f(p)$ ($C_1$ point) then any two orthogonal vectors would suffice.  If there are two vectors, 
$\partial f$ is a segment of a line in $\mathbb{R}^3$ not passing through zero.   We just need to take any two independent vectors in a plane passing through zero that is parallel to this line.  Then  $(v_1, v_2,  (1-t) w_1 + t w_2)$ are linearly independent for any $0 \leq t \leq1$, as no vector with the endpoint on the line is in the span of $v_1$ and $v_2$.  Similarly, the convex hull of 3 independent vectors $w_i$ in $\partial f(p)$ is a subset of a plane, not containing zero,  so we take the vectors $v_1$ and $v_2$ to be a basis of a plane through zero, parallel to this plane. This proves the existence of $B$ in all cases.
\end{proof}

The infinitesimal offsets are all homeomorphic to each other, which makes their topology unique for all distances smaller than $\hat{\epsilon}$.
To prove this statement, we assume that the simplicial complex $S$ is simplicially embedded (\Cref{def:simpembedding}) in a tetrahedral background mesh $M$. If the input is a simplicial complex, such embedding can always be constructed, possibly with some refinement of $S$, which does not change its geometry \cite{Diazzi2023}. The additional benefit of our proof relying on the background mesh is that it immediately leads to a constructive algorithm for topology-preserving offset approximation.

\begin{definition}[Simplicial Embedding]
A tetrahedral mesh $\M$ is a \emph{simplicial embedding} of a simplicial complex  $S \subset \M$ if for any tetrahedron $t \in \M$ the intersection of  $S$ and $t$ is either empty, or is a vertex, an edge, or a triangle of $S$ and $M$.
\label{def:simpembedding}
\end{definition}

\paragraph{Locality.} 
We now partition the space into a collection of convex cells, enabling us to localize the definition of the offset surface. We subdivide every tetrahedron into four convex regions, using the pattern shown in \Cref{fig:proof-cell}. We denote with $\mathcal{V}_t(v)$ the convex cell corresponding to a vertex $v \in t$. We define a convex cell for each edge $e_{ij}$ between the vertices $v_i$ and $v_j$ by taking the intersection of the convex cells of its vertices $\mathcal{V}_t(e_{ij}) = \mathcal{V}_t(v_i) \cap \mathcal{V}_t(v_j)$, and similarly for every face $\mathcal{V}_t(f_{ijk}) = \mathcal{V}_t(v_i) \cap \mathcal{V}_t(v_j) \cap \mathcal{V}_t(v_k)$ ($f_{ijk}$ has vertices $v_i$, $v_j$, and $v_k$). We denote as $\tau_S(s)$ the open star of the simplex $s$, which contains $s$ and all other simplices in $S$ containing $s$. For example, the star of an edge is the edge itself, plus all the triangles incident to it.

\begin{figure}
    \centering\footnotesize
    \includegraphics[width=0.6\linewidth]{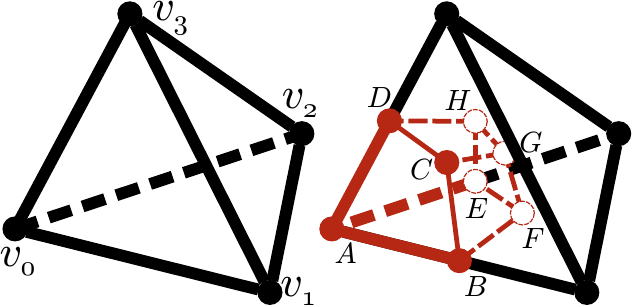}\\
    \caption{Illustration of a tetrahedron with its four vertices (left) and a convex cell in red (right).}
    \label{fig:proof-cell}
\end{figure}

Consider a tetrahedral mesh $M$ that is a simplicial embedding of a simplicial complex $S$. We show that for any choice of $S$, there is a sufficiently small $\epsilon$ for which only $\tau_S(v)$ of a vertex $v$ contributes to the offset in $\mathcal{V}_t(v)$. To do this, we consider the offset we would obtain by considering the maximal surface that could be embedded in $M$, which we denote as $\hat{M}$, composed of all vertices, edges, and triangles in $M$. We prove that for this worst case, the locality holds --- in practice, $S$ would be a subset of $\hat{M}$. This property is crucial, as it allows us to construct the offset locally, unlike traditional offsets whose geometry and topology are global properties.

\begin{lemma}[Offset Locality]
\label{lemma:locality} 
For any tetrahedral mesh $M$ embedded in $\mathbb{R}^3$, there exists an $\hat{\epsilon} > 0$ such that, for every simplex $s \in \hat{M}$ and every tetrahedron $t \in M$ incident to $s$: \\
\begin{equation*}
\big(\mathcal{S}_\epsilon(\hat{M}) \setminus \mathcal{S}_\epsilon(\tau_{\hat{M}}(s)) \big) \cap \mathcal{V}_t(s) = \emptyset
\end{equation*}
\end{lemma}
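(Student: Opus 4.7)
The plan is to exploit the finiteness of $M$ and the convex-cell structure to reduce the claim to a compactness argument about disjoint compact sets. The key ingredient is the following \emph{local disjointness property}: for any tetrahedron $t$, any face $s$ of $t$, and any other face $s''$ of $t$ with $s \not\subseteq s''$, the closed cell $\overline{\mathcal{V}_t(s)}$ is disjoint from $s''$.

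To establish this, I would describe the cells in barycentric coordinates $(\lambda_{v_1}, \lambda_{v_2}, \lambda_{v_3}, \lambda_{v_4})$ of $t$. The construction in Figure~\ref{fig:proof-cell} identifies $\mathcal{V}_t(v)$ with the region where $\lambda_v$ is maximal among the four, so taking intersections yields
\[
\overline{\mathcal{V}_t(s)} \;=\; \{p \in t : \lambda_{v_i}(p) = \lambda_{v_j}(p) \text{ for all } v_i, v_j \in s,\text{ and this common value is maximal}\}.
\]
If $s \not\subseteq s''$, pick $v_i \in s \setminus s''$; on $s''$ we have $\lambda_{v_i} = 0$. The equality constraint then forces $\lambda_{v_j} = 0$ for every $v_j \in s$, and maximality forces $\lambda_{v_k} \leq 0$ for the remaining vertices $v_k$, so all four barycentric coordinates vanish, contradicting $\sum_k \lambda_{v_k} = 1$.

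Next, I would promote this local statement to all of $\hat{M}$. For any $s' \in \hat{M} \setminus \tau_{\hat{M}}(s)$, the intersection $s' \cap t$ is either empty or a common face $s''$ of $t$ (by the simplicial structure). Since $s \not\subseteq s'$ we cannot have $s \subseteq s''$, so the local disjointness yields $\overline{\mathcal{V}_t(s)} \cap s' = \overline{\mathcal{V}_t(s)} \cap s'' = \emptyset$. Compactness of $\overline{\mathcal{V}_t(s)}$ and of each $s'$, together with finiteness of $M$, makes
\[
\hat{\epsilon} \;:=\; \tfrac{1}{2}\min\bigl\{ \mathrm{dist}(\overline{\mathcal{V}_t(s)}, s') : s \in \hat{M},\; t \ni s,\; s' \in \hat{M} \setminus \tau_{\hat{M}}(s) \bigr\} \;>\; 0
\]
well defined. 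Any $p \in \mathcal{V}_t(s) \cap \mathcal{S}_\epsilon(\hat{M})$ with $\epsilon \leq \hat{\epsilon}$ must therefore be within $\epsilon$ of some $s' \in \tau_{\hat{M}}(s)$, i.e., $p \in \mathcal{S}_\epsilon(\tau_{\hat{M}}(s))$, yielding the claimed emptiness.

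The main obstacle is the first step: justifying that the geometric cell pattern in Figure~\ref{fig:proof-cell} really does coincide with the max-barycentric description, so that the algebraic contradiction above applies. Once that identification is made, the remainder is a routine compactness-and-finiteness wrap-up. A secondary subtlety is that the cells must agree across shared faces of adjacent tetrahedra, so that the argument extends to simplices $s'$ lying in neighbouring tets; this compatibility is automatic for max-barycentric cells because the cell structure on a shared face is determined by barycentric coordinates intrinsic to that face.
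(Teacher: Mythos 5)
Your proof is correct and follows essentially the same route as the paper's: show that every simplex outside $\tau_{\hat{M}}(s)$ is disjoint from the closed cell $\overline{\mathcal{V}_t(s)}$, then use compactness of the finitely many simplex/cell pairs to extract a uniform positive $\hat{\epsilon}$. The only difference is that you justify the disjointness step explicitly via the max-barycentric description of the cells (and the fact that in a simplicial complex $s' \cap t$ is a common face), whereas the paper merely asserts that non-star simplices cannot meet the boundary of $\mathcal{V}_t(s)$; this is a welcome tightening rather than a divergence.
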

\begin{proof}
All simplices in $\tau_{\hat{M}}(s)$ contain $s$, and there is always a point $p \in s$ contained in $\mathcal{V}_t(v)$ since $t$ is incident to $s$, and therefore the distance between all simplices in $\mathcal{S}_\epsilon(\tau_{\hat{M}}(s))$ and $\mathcal{V}_t(s)$ is zero. All other simplices in $\hat{M} \setminus \tau_{\hat{M}}(s)$ 
have a positive distance from $\mathcal{V}_t(s)$ as: (1) they cannot intersect the interior of $t$ since $\M$ is embedded and (2) they cannot be in the boundary of $\mathcal{V}_t(s)$ as the intersection of $\mathcal{V}_t(s)$ and $t$ is a subset of $\tau_{\hat{M}}(s)$. For any $\epsilon$ smaller than the minimal distance between $\mathcal{S}_\epsilon(\hat{M}) \setminus \mathcal{S}_\epsilon(\tau_{\hat{M}}(s))$ and $\mathcal{V}_t(s)$ their intersection is empty. If we pick the smallest $\epsilon$ for all $s \in \hat{M}$ and $t \in M$, then the locality condition holds for all convex cells $\mathcal{V}_t(s)$.
\end{proof}

We note that an $\epsilon$ satisfying the locality property for $\hat{M}$ also trivially satisfies the same property for any other embedded simplicial complex $S \subset \hat{M}$.

\begin{lemma}
Let $X$ and $Y$ be connected closed bounded submanifolds of $\mathbb{R}^n$ with boundaries. Let $f: X \rightarrow Y$ be a continuous injective map, bijectively mapping $\partial X$ to $\partial Y$, and its inverse on $f(X)$ is continuous. Then $f$ is a homeomorphism between $X$ and $Y$.
\label{lemma:homeomorphism}
\end{lemma}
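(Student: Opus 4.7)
My plan is to reduce the statement to showing that $f$ is surjective, since injectivity, continuity of $f$, and continuity of $f^{-1}$ on $f(X)$ are given: these data, combined with surjectivity, directly upgrade $f$ from a topological embedding into a homeomorphism between $X$ and $Y$. The overall strategy is a classical open-and-closed argument on the connected space $Y$, with Brouwer's invariance of domain as the main tool.

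I would start with the closed half: $X$ is compact (closed and bounded in $\reals^n$), so $f(X)$ is compact and therefore closed in $Y$. Before the open half, I would pin down dimensions. The restriction of $f$ to the boundary is a continuous bijection between the compact spaces $\partial X$ and $\partial Y$, hence a homeomorphism, which forces $\dim X = \dim Y =: k$ by invariance of dimension. Injectivity together with $f(\partial X) = \partial Y$ also forces $f(\inter X) \subseteq \inter Y$, because any interior preimage of a boundary point would collide with its preimage on $\partial X$.

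For the open half, I would pick any $x \in \inter X$, choose charts identifying a neighborhood of $x$ in $\inter X$ with an open set $U \subset \reals^k$ and a neighborhood of $f(x)$ in $\inter Y$ with an open set $V \subset \reals^k$, and apply Brouwer's invariance of domain to the continuous injective local representation of $f$. This yields that $f(\inter X)$ is open in $\inter Y$, and hence open in $Y$.

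To close the argument I would invoke connectedness: $\inter Y = f(\inter X) \sqcup (\inter Y \setminus f(X))$ is a partition into two subsets that are both open in $\inter Y$ (the first by the invariance-of-domain step, the second because $f(X)$ is closed in $Y$). Since $\inter Y$ is connected (a standard consequence of $Y$ being a connected manifold with boundary, via the density of the interior) and $f(\inter X)$ is nonempty, the complement must vanish, giving $f(X) = Y$. The main obstacle is the invariance-of-domain step, which leans crucially on the clean dimension match supplied by the boundary bijection; a secondary subtlety is the connectedness of $\inter Y$, which I would justify from the density of $\inter Y$ in the connected space $Y$, and the degenerate case $\partial X = \emptyset$ (in which $X$ and $Y$ are closed manifolds) would be handled by the same argument with the boundary hypotheses vacuously satisfied.
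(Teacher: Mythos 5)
Your proposal is correct and follows essentially the same route as the paper: both reduce the lemma to surjectivity of $f$, use compactness to conclude that $f(X)$ is closed in $Y$, and finish with a connectedness argument on $\mathrm{Int}(Y)$. The only substantive difference is presentational: you invoke Brouwer's invariance of domain explicitly to show $f(\mathrm{Int}\,X)$ is open and then run a clopen partition, whereas the paper runs a path/frontier-point contradiction in which the same invariance-of-domain fact is used implicitly (in the step asserting that a point of $f(X)$ lying on the topological frontier of $f(X)$ must belong to $\partial Y$), so making it explicit is if anything a small improvement.
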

\begin{proof}
As f is continuous, its image $f(X)$ in $Y$ is closed and bounded, as $X$ is closed and bounded.  As $f: X \rightarrow f(X)$ is a homeomorphism, $f$ maps $\partial X$ to $\partial f(X)$,  so $\partial f(X) = \partial Y$.
Suppose $f(X)$ does not coincide with the whole $Y$.  As $f(X)$ is closed, $Y \setminus f(X)$  is open, by assumption nonempty, and is contained in the interior of $Y$ ($\mathrm{Int}(Y)$). $\mathrm{Int}(Y)$ is connected because the interior of a connected manifold is connected. Let $y_0$ be an interior point of $f(X)$. Consider a continuous simple path $p : [0,1] \rightarrow Y$ connecting $y_0$  and a point in $\mathrm{Int}(Y) \setminus f(X)$ 
(as $X$ has interior points and $f(\partial X) = \partial Y$, there are interior points in $f(X)$, so $y_0$ exists). 

Consider the subset $Z$ of $[0,1]$ such that $f(Z)$ is contained in $f(X)$  and its complement in $[0,1]$.
As $f(X)$  is closed, then $Z = p^{-1}(f(X) \cap \mathrm{Im}(p))$ is closed, and not coinciding with $\mathrm{Im}(p)$, so it has a boundary at an interior point $q$ of $[0,1]$, for which $p(q)$ is contained in $f(X)$.  As it is a boundary point of $Z$,  any neighborhood of $p(q)$ in $Y$ contains both points of $f(X)$ and $Y \setminus f(X)$, i.e., is on the boundary of $f(X)$.   But by construction of the path, $p(q)$ must be an interior point of $Y$, which contradicts $f(\partial X) = \partial f(X) = \partial Y$.
\end{proof}

\begin{lemma}
There is an $0 < \epsilon \le \hat{\epsilon}$, such that the union of the $\epsilon$-inflations of the elements of $\tau_S(s)$ of a simplex $s \in S$, 
\begin{align*}
\mathcal{S_\epsilon}(\tau_S(s)) = \bigcup_{c \in \tau_S(c)} \mathcal{S_\epsilon}(c),
\end{align*}
is a star domain with $s$ in its kernel and its boundary $\partial \mathcal{S_\epsilon}(\tau_S(s))$ is manifold.
\label{lemma:stardomain}
\end{lemma}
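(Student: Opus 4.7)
The plan is to split the lemma into its two assertions and verify them independently. The star-shapedness with $s$ in the kernel is a pure convexity statement that will hold for any $\epsilon > 0$, whereas manifoldness of the boundary will follow by invoking Theorem \ref{thm:infinitesimal} on the closed star of $s$ for sufficiently small $\epsilon$.

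For the star-shaped part, I would fix an arbitrary $q \in s$ and an arbitrary $p \in \mathcal{S_\epsilon}(\tau_S(s))$, and then locate a single simplex $c \in \tau_S(s)$ together with a nearest-point witness $q' \in c$ for which $p \in \mathcal{S_\epsilon}(c)$ and $\|p - q'\| \le \epsilon$. Since $s$ is a face of the convex simplex $c$, both $q$ and $q'$ lie in $c$, so linear interpolation produces a parallel segment $(1-t)q + t q' \in c$ whose distance from $(1-t)q + tp$ is pointwise bounded by $\epsilon$. This places the entire segment inside $\mathcal{S_\epsilon}(c)$, hence inside $\mathcal{S_\epsilon}(\tau_S(s))$, showing that $s$ is in the kernel. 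Notably, this argument requires no smallness hypothesis on $\epsilon$.

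For the manifold boundary, I would observe that the pointset union underlying $\tau_S(s)$ coincides with that of its closure, which is an honest subcomplex of $S$ whose faces inherit general position from $S$. Applying Theorem \ref{thm:infinitesimal} to this subcomplex supplies some $\hat{\epsilon}_s > 0$ for which $\partial \mathcal{S_\epsilon}(\tau_S(s))$ is manifold on $(0,\hat{\epsilon}_s]$; taking the minimum of $\hat{\epsilon}_s$ and the $\hat{\epsilon}$ of Lemma \ref{lemma:locality} then yields a single threshold securing both properties simultaneously. The main subtlety, such as it is, lies in noticing that the convexity argument stays inside a single summand $\mathcal{S_\epsilon}(c)$ and never has to cross between inflations of distinct simplices in the star; once that is pinned down, neither assertion needs more than what is already available in the paper.
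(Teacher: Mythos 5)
Your proof is correct and follows essentially the same route as the paper: star-shapedness comes from the observation that any segment from a point of $s$ to a point of the union stays inside a single convex summand $\mathcal{S}_\epsilon(c)$ (your explicit interpolation is just an unwound version of the paper's appeal to convexity of the Minkowski sum together with the fact that a common kernel point of star domains lies in the kernel of their union), and manifoldness is obtained by invoking Theorem~\ref{thm:infinitesimal}. Your remark that the open star should be replaced by its closure --- a genuine subcomplex with the same underlying point set --- before applying Theorem~\ref{thm:infinitesimal} is a detail the paper glosses over and is a worthwhile clarification.
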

\begin{proof}

Each $\mathcal{S_\epsilon}(c)$ is the Minkowski sum of two convex primitives (a simplex and a sphere), and it is thus convex.
For two star-shaped domains $r_1$ and $r_2$, every point $p$ in the intersection of the two kernels ($p \in \ker(r_1) \cap \ker(r_2)$) has visibility to all points in both $r_1$ and $r_2$ and thus belongs to the kernel of the union of $r_1$ and $r_2$ ($p \in \ker(r_1 \cup r_2)$). 
Since $s$ is a face of all $c_j \in \tau_S(s)$, $s$ is in the kernel of all $\mathcal{S_\epsilon}(c_i)$, and therefore in the kernel of their union. This proves that $\mathcal{S_\epsilon}(\tau_S(s))$ is a star domain with $s$ in its kernel. From \Cref{thm:infinitesimal}, its boundary is manifold as $\tau_S(s)$ is a simplicial complex. 
\end{proof}

Using the result of the previous three lemmas, we will show that a convex cell's intersection with any infinitesimal offset is either empty or a disk.

\begin{figure}
    \centering\footnotesize
    \includegraphics[width=\linewidth]{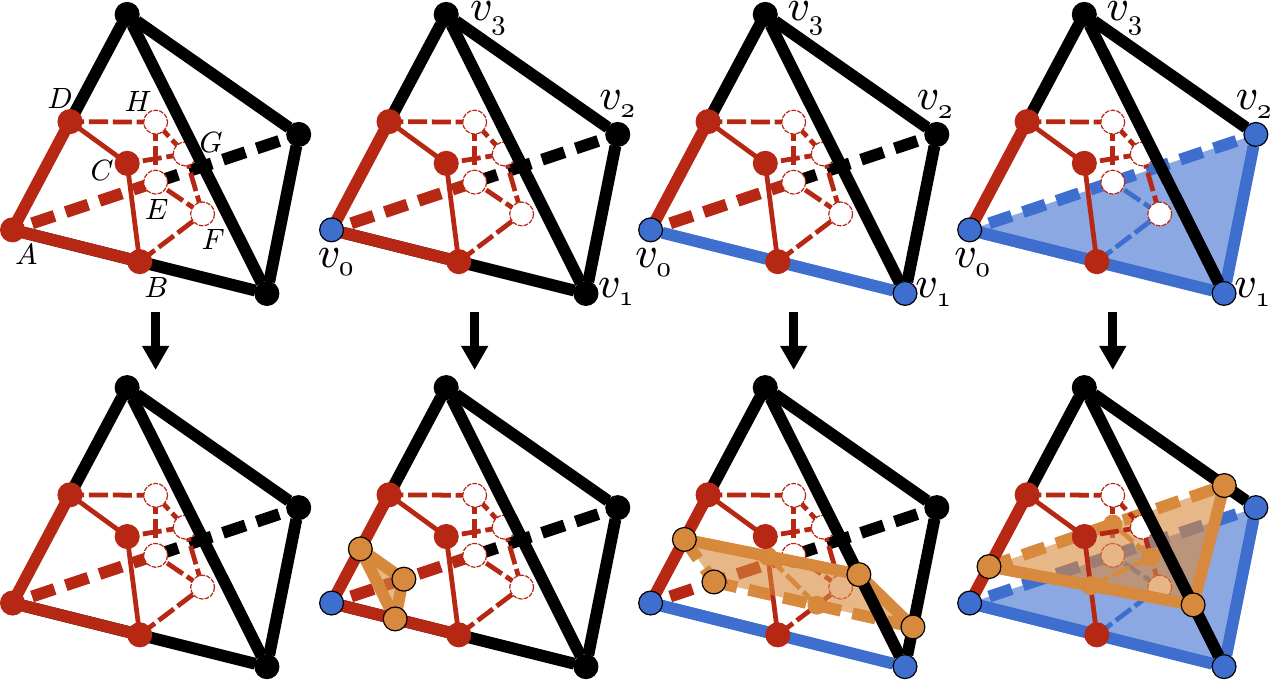}
    \parbox{0.24\linewidth}{\centering (1)}\hfill
    \parbox{0.24\linewidth}{\centering (2)}\hfill
    \parbox{0.24\linewidth}{\centering (3)}\hfill
    \parbox{0.24\linewidth}{\centering (4)}
    \caption{Rules to generate the topological offset (orange) for an embedded simplicial complex (blue). For illustration purposes, we omit the simplicial decomposition of the polyhedra.}
    \label{fig:rules}
\end{figure}

\begin{theorem}[Local Disk Topology]
For any $\epsilon \in (0, \hat{\epsilon})$ and for any cell $\mathcal{V}_t(v_i)$, the intersection of the cell and the offset $\mathcal{S_\epsilon}(S) \cap \mathcal{V}_t(v_i)$ is empty iff $S \cap \mathcal{V}_t(v_i)$ is empty, otherwise it is a topological disk.
\label{thm:dtov}
\end{theorem}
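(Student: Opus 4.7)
The plan is to exploit the locality from \Cref{lemma:locality} together with the star-shaped structure of \Cref{lemma:stardomain}. First, I apply \Cref{lemma:locality} to rewrite $\mathcal{S}_\epsilon(S) \cap \mathcal{V}_t(v_i) = \mathcal{S}_\epsilon(\tau_S(v_i)) \cap \mathcal{V}_t(v_i)$, so that only simplices in the star of $v_i$ are relevant inside the cell.

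For the ``empty iff empty'' direction, I combine the simplicial embedding of $\M$ (\Cref{def:simpembedding}) with the closure-under-faces property of $S$: any simplex of $S$ meeting $\mathcal{V}_t(v_i)$ must be a face of $t$ incident to $v_i$, which forces $v_i \in S$. Hence $S \cap \mathcal{V}_t(v_i) = \emptyset \iff v_i \notin S \iff \tau_S(v_i) = \emptyset$, and in that case $\mathcal{S}_\epsilon(\tau_S(v_i)) = \emptyset$. Conversely, if $v_i \in S$, then $v_i \in \mathcal{V}_t(v_i) \cap \mathcal{S}_\epsilon(\tau_S(v_i))$, so the intersection is non-empty.

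Assume now $v_i \in S$ and write $K = \mathcal{S}_\epsilon(\tau_S(v_i)) \cap \mathcal{V}_t(v_i)$. By \Cref{lemma:stardomain}, $v_i$ lies in the kernel of $\mathcal{S}_\epsilon(\tau_S(v_i))$; since $\mathcal{V}_t(v_i)$ is convex it is star-shaped with respect to $v_i$, so $K$ is compact, star-shaped with respect to $v_i$, and has non-empty interior (the $\epsilon$-ball around $v_i$ meets the interior of the convex cell). The tangent cone of $\mathcal{V}_t(v_i)$ at the vertex $v_i$ is a convex polyhedral cone, so the set of unit directions into $K$ from $v_i$ is a closed spherical polygon $\Omega$, topologically a $2$-disk. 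I then define the radial function $R: \Omega \to (0,\infty)$ by $R(\omega) = \sup\{t \geq 0 : v_i + t\omega \in K\}$. Compactness of $K$ yields upper semicontinuity of $R$ by a standard subsequence argument, while the non-empty interior of $K$ near $v_i$ and star-shapedness yield lower semicontinuity, so $R$ is continuous.

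The map $\phi: C(\Omega) \to K$ out of the topological cone $C(\Omega) = (\Omega \times [0,1])/(\Omega \times \{0\})$ given by $\phi([\omega,r]) = v_i + r\, R(\omega)\,\omega$ is then a continuous bijection from a compact space onto a Hausdorff space and hence a homeomorphism; since the cone over a $2$-disk is a $3$-disk, $K$ is a topological disk. I expect the main obstacle to be the continuity of $R$ along the boundary $\partial \Omega$, because $v_i$ lies on the boundary of the convex cell and tangent directions require a separate analysis. A cleaner backup route is to invoke \Cref{lemma:homeomorphism}: build $\phi$ as above, verify it is a continuous injection, and check that $\phi$ maps $\partial C(\Omega)$ bijectively onto $\partial K$ by using the manifoldness of $\partial \mathcal{S}_\epsilon(\tau_S(v_i))$ from \Cref{thm:infinitesimal} together with the piecewise-linear structure of $\partial \mathcal{V}_t(v_i)$ and their transverse intersection (available in general position).
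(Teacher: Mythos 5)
Your reduction to the star $\tau_S(v_i)$ via \Cref{lemma:locality}, the resolution of the ``empty iff empty'' direction through the simplicial embedding, and the use of the star-shapedness from \Cref{lemma:stardomain} all match the paper's ingredients. But there is a genuine gap in the main part: you prove the wrong object is a disk. You show that the \emph{solid} $K = \mathcal{S}_\epsilon(\tau_S(v_i)) \cap \mathcal{V}_t(v_i)$ is a cone over a spherical polygon, hence a $3$-ball. What the paper proves --- and what is actually consumed downstream in \Cref{thm:homeomorphic} (``the offset $\partial\mathcal{S}_\epsilon(S)$ within every cell \ldots is a disk'') --- is that the offset \emph{surface} $\partial\mathcal{S}_\epsilon(S) \cap \mathcal{V}_t(v_i)$ is a $2$-disk. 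The theorem's notation is admittedly ambiguous, but the paper's proof is entirely about $\partial\mathcal{S}_\epsilon(\tau_S(A))$: it identifies the intersection points with the cell's edges, the simple curves on the cell's quadrilateral faces, and then the surface patch itself. Knowing $K$ is a $3$-ball only tells you $\partial K$ is a $2$-sphere; $\partial K$ decomposes into the piece lying on $\partial\mathcal{S}_\epsilon(S)$ and the piece lying on $\partial\mathcal{V}_t(v_i)$, and concluding that the former is a connected $2$-disk requires showing the latter is a $2$-disk and that they meet in a single circle --- which is precisely the Points/Curves/Patch analysis you have skipped. The paper gets the $2$-disk directly by radially projecting (from the kernel point) a known disk, namely the spherical patch $\partial\mathcal{S}_\epsilon(A) \cap \mathcal{V}_t(A)$, onto $\partial\mathcal{S}_\epsilon(\tau_S(A)) \cap \mathcal{V}_t(A)$, and upgrading injectivity to a homeomorphism via \Cref{lemma:homeomorphism}; it must do this case by case (vertex only; vertex plus edge; vertex plus two edges and a triangle) because the projection center and the affected cell faces change between configurations.

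Two smaller points. First, the continuity of your radial function $R$ at $\partial\Omega$ is not a routine detail you can defer: $v_i$ is a \emph{corner} of the convex cell, so it is not an interior point of $K$, and the standard ``kernel has nonempty interior'' argument for lower semicontinuity does not apply; you would need to use that $K$ is an intersection of the convex cell with a finite union of convex inflations sharing $v_i$, or route the argument through \Cref{lemma:homeomorphism} as the paper does. Second, even your backup route via \Cref{lemma:homeomorphism} applied to $\phi\colon C(\Omega) \to K$ still only certifies the $3$-ball statement; to repair the proof you would need to restrict attention to the offset surface itself and carry out the boundary analysis on the cell's faces and edges.
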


\begin{proof}
We consider the four rules in \Cref{fig:rules} and restrict them to individual convex cells $\mathcal{V}_t(v_i)$ of their vertices. We observe that, after factoring out symmetries, there are only four distinct configurations (Figure \ref{fig:rules}): (1) the convex cell does not contain any simplex of $S$, (2) it contains only a vertex of $S$, (3) it contains a vertex and intersects an edge of $S$, and (4) it contains a vertex and intersects two edges and a triangle of $S$.

\textbf{Case (1)}: Due to locality (\Cref{lemma:locality}), the convex cell does not contain $\partial \mathcal{S}_\epsilon(S)$. 

\textbf{Case (2)}: Due to locality (\Cref{lemma:locality}), the convex cell $\mathcal{V}_t(A)$ contains a part of $\partial \mathcal{S}_\epsilon(S)$ which, inside the cell, is identical to $\partial \mathcal{S}_\epsilon(\tau_S(A))$. 

\emph{Points.} We will show that $\partial \mathcal{S}_\epsilon(\tau_S(A))$ intersects the three edges $AB$, $AE$, and $AD$ at one point per edge. 
Due to locality (\Cref{lemma:locality}), $\epsilon < \|B - A\|$, as the offset of $A$ would otherwise contribute to another convex cell. As $\mathcal{S}_\epsilon(\tau_S(A))$ is a star domain with $A$ in its kernel ($\epsilon > 0$) and $B$ outside ($\epsilon < \|B - A\|$), the edge $AB$ must intersect $\partial \mathcal{S}_\epsilon(\tau_S(A))$ exactly once. The same holds for $AE$ and $AD$.
From locality, it also follows that $\partial \mathcal{S}_\epsilon(\tau_S(A))$ does not intersect any edge of $\mathcal{V}_t(A)$ that is not incident to $A$, as they are also part of other convex cells. 

\emph{Curves.} 
We will show that $\partial \mathcal{S}_\epsilon(\tau_S(A))$ intersects the three quadrilateral faces $ABCD$, $ABFE$, and $AEHD$ in three simple curves, one for each face, whose boundary is composed of the vertices identified before. 
Due to locality, the sphere $\partial \mathcal{S}_\epsilon(A)$ intersects $AB$ and $AD$, but not $BC$ or $CD$. Therefore, the intersection $\partial \mathcal{S}_\epsilon(A) \cap ABCD = \gamma$ is a circular segment.
Let $r$ be a ray from $A$ to a point on $\gamma$. $r$ intersects $\partial S_\epsilon(\tau_S(A)) \cap ABCD$ in a unique point as $S_\epsilon(\tau_S(A)) \cap ABCD$ is star-shaped with $A$ in its kernel (\Cref{lemma:stardomain}), and all intersection points are contained in the interior of $ABCD$ due to \Cref{lemma:locality}.
Therefore, $r$ establishes an injective mapping between $\gamma$ and $\partial S_\epsilon(\tau_S(A)) \cap ABCD$. Due to \Cref{lemma:homeomorphism}, this map is bijective.
As $\gamma$ is a simple curve, it follows that $\partial S_\epsilon(\tau_S(A)) \cap ABCD$ is also a simple curve. The same argument can be applied to $ABFE$ and $AEHD$.

\emph{Patch.}
The intersection $\Gamma = \partial \mathcal{S}_\epsilon(A) \cap \mathcal{V}_t(A)$ is a surface with disk topology (a spherical, simply connected patch) due to locality.
Its boundary $\partial \Gamma$ maps via ray casting, with $A$ as origin, to the points and curves identified before. Ray casting with $A$ as origin defines an injective map from the interior of $\Gamma$ to $\Gamma_{\tau S} = \partial \mathcal{S}_\epsilon(\tau_S(A)) \cap \mathcal{V}_t(A)$, as $\Gamma_{\tau S}$ is a star-shaped domain with $A$ in its kernel, and $\partial \mathcal{S}_\epsilon(\tau_S(A))$ must not intersect $BFGC$, $FGHE$, or $CGHD$ (locality).

Both $\Gamma$ and $\Gamma_{\tau S}$ are connected closed 2-manifold with metric inherited from $\mathbb{R}^3$ with a single boundary (the intersection of both with the boundary of $V_t$ is a 1-manifold as we show above).
Both are connected: if it consisted of several connected components with boundary, then we would have several separate boundary loops.
So only one connected component can have a boundary; the rest need to be without boundary. But then these connected components would also be separate components of the entire offset surface, as there cannot be a continuous path connecting them to any point of the offset outside $\mathcal{V}_t$, not intersecting $\mathcal{V}_t$ boundary.
From \Cref{lemma:homeomorphism}, it follows that the projection is a homeomorphism, and thus $\Gamma_{\tau S}$ is a disk.  

\textbf{Case (3)}: 
\emph{Points.} $\partial S_\epsilon(\tau_S(A))$ 
intersects the 4 edges $AE$, $AD$, $BF$, and $BC$ at one point per edge, and it does not intersect any of the other edges of $\mathcal{V}_t(A)$. The proof is the same as for Case (2) for $AE$ and $AD$. For $BF$, we observe that $\partial \mathcal{S}_\epsilon(\tau_S(A)) \cap BF = $ $\partial \mathcal{S}_\epsilon(\tau_S(e_{01})) \cap BF$ (\Cref{lemma:locality}) and 
$\partial \mathcal{S}_\epsilon(\tau_S(e_{01}))$ is a star domain with $B$ in its kernel ($\epsilon > 0$) and $F$ outside ($\epsilon < \|F - B\|$), the edge $BF$ must intersect $\partial \mathcal{S}_\epsilon(\tau_S(A))$ exactly once. The proof for $BC$ is similar.

\emph{Curves.} $\partial S_\epsilon(\tau(A))$ intersects the 4 quadrilateral faces $ABCD$, $ABFE$, $BFGC$, and $AEHD$ in four curves, one for each face, whose boundary is composed of the vertices identified before. 
The intersection of $\partial \mathcal{S}_\epsilon(B) \cap BFGC$ is a circular segment, thus a simple curve. We establish an injective mapping between $\partial \mathcal{S}_\epsilon(B) \cap BFGC$ and $\partial \mathcal{S}_\epsilon(\tau_S(A)) \cap BFGC$ via ray casting as in Case (2), using $B$  as the origin and noting that $B$ is in the kernel of $\partial \mathcal{S}_\epsilon(\tau_S(e_{01}))$ (\Cref{lemma:stardomain}) and $\partial \mathcal{S}_\epsilon(\tau_S(e_{01})) \cap BFGC = \partial \mathcal{S}_\epsilon(\tau_S(A)) \cap BFGC$. This map is bijective due to \Cref{lemma:homeomorphism}. The proof for $AEHD$ is the same as for Case (2), using $A$ as the ray origin.

For $ABCD$ and $ABFE$, the curves of interest intersect two opposite sides of the quads, so the polar argument is slightly different. In both cases, we need to consider $A$ as the projection point: the bijection is between a part of a circle cut out by the sides of the polygon and the segment of the boundary of 
$\cS_\epsilon(S) \cap V_t(A)$. Consider points $C'$ and $D'$ -- intersection points on edges $BC$ and $AD$.  On a small circle around $A$, these project to $C''$ and $D''$. $C''$ is in the interior of $ABCD$, not on an edge. Any ray through $A$ in the interior of $ABCD$ and to the same side from $AC''$ as 
$D''$, passes through exactly one point of the circle and one point of  $\partial \cS_\epsilon(S) \cap V_t(A)$. It remains to prove that no ray on the other side of $AC''$ intersects  $\partial \cS_\epsilon(S) \cap V_t(A)$. Suppose it does: then if we extend it further, it will intersect the segment $BC$ on the side closer to $B$, i.e. in the interior of $\cS_\epsilon(S) \cap V_t(A)$, this means that it has to intersect $\partial \cS_\epsilon(S) \cap V_t(A)$  twice on its way from $A$ to $BC''$ which contradicts the fact that it is star-shaped. We conclude that there is a bijection.

\emph{Patch.}
Same proof as Case (2).

\textbf{Case (4)}: 
\emph{Points.} $\partial \mathcal{S}_\epsilon(\tau_S(A))$ intersects the 4 edges $AD$, $BC$, $FG$, and $EH$ at one point per edge, and it does not intersect any of the other edges of $\mathcal{V}_t(A)$. The proof is the same as Case (2) for $AD$ with A in the kernel, same as Case (3) for $BC$ and $EH$, with $B$ and $E$ in the kernel, respectively. For $FG$, we observe that $\partial \mathcal{S}_\epsilon(\tau_S(A)) \cap FG = \partial \mathcal{S}_\epsilon(\tau_S(f_{012})) \cap BF$ (\Cref{lemma:locality}) and 
$\partial \mathcal{S}_\epsilon(\tau_S(f_{012}))$ is a star domain with $F$ in its kernel ($\epsilon > 0$) and $G$ outside ($\epsilon < \|G - G\|$), the edge $FG$ must intersect $\partial \mathcal{S}_\epsilon(\tau_S(A))$ exactly once.

\emph{Curves.} $\partial \mathcal{S}_\epsilon(\tau(A))$ intersects 4 quadrilateral faces $ABCD$, $BFGC$, $FGHE$, and $AEHD$, in four simple curves, one for each face, whose boundary is composed of the vertices identified before. The proof for all 4 faces is similar to face $ABCD$ of Case (3), using $A$, $B$, $E$, and $A$ as projection points. 

\emph{Patch.}
Same proof as Case (2). 

\end{proof}

From \Cref{thm:dtov}, it follows that all infinitesimal offsets are homeomorphic to each other, as a bijective map can be defined between each disk inside each convex cell.

\paragraph{Topological Offset}

We established that infinitesimal offsets of a simplicial complex are all manifold surfaces homeomorphic to each other: their topology is unique for a given input surface.
To decouple geometry and topology, we introduce the concept of a \emph{topological offset} (Figure \ref{fig:smooth}, green curve), which is a surface that is homeomorphic to an infinitesimal offset, but whose geometric embedding is arbitrary (as long as it does not intersect the input surface). The reason for not using the infinitesimal offset directly is practical: representing it in a mesh might lead to infinitesimally small elements: on the other hand, topological offsets can be computed reliably and robustly.

\begin{definition}[Topological Offset]
A simple surface $O$ is a \emph{topological offset} of a 
non-intersecting simplicial complex $S \subset \Omega$ if  $O$ and $S$ do not intersect and $O$ is homeomorphic to an infinitesimal offset of $S$, i.e., there exists a continuous and bijective function $\hat c \colon O \rightarrow \mathcal{S_\epsilon}(S)$ for any $0 < \epsilon < \hat{\epsilon}$.

\label{def:topoffset}
\end{definition}
The function $\hat c$ is a topological version of the closest point function $c$ used in the offset definition.

\section{Topological Offset Construction}
\label{sec:discrete_topological_offset}

\begin{figure}
    \centering\footnotesize
    \includegraphics[width=0.96\linewidth]{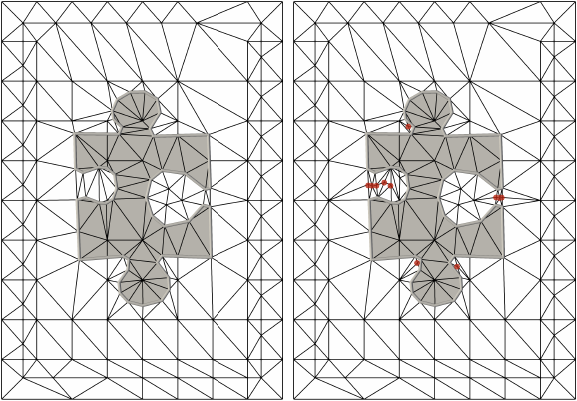}\\
    \parbox{0.48\linewidth}{\centering (1) Input}
    \parbox{0.48\linewidth}{\centering (2) Simplicial Embedding}
    \includegraphics[width=0.96\linewidth]{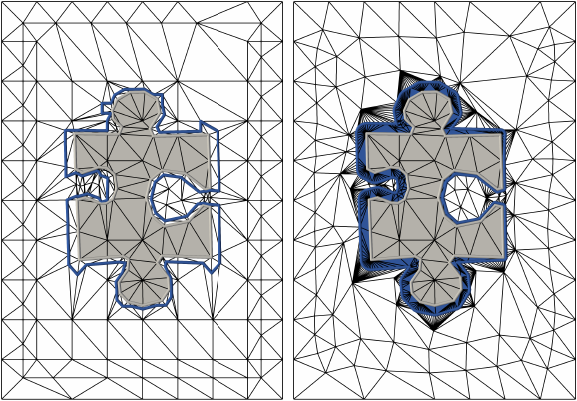}\\
    \parbox{0.48\linewidth}{\centering (3) Offset Initialization}
    \parbox{0.48\linewidth}{\centering (4) Optimization}
    \caption{The full 2D pipeline for generating the offset: (1) we start with a
    simplicial complex embedded in a background mesh, (2) we make the embedding simplicial (see highlighted vertices), which allows us to (3) generate an initial topological offset, and (4) we optionally optimize the offset while keeping the background mesh and offset valid.}
    \label{fig:flowchart}
\end{figure}

We now describe an algorithm (\Cref{fig:flowchart}) to compute a topological offset of a simplicial complex $S$. The algorithm has two steps (\Cref{sec:simplicial_embedding} and \Cref{sec:insertion}) that are parameter-free, followed by an optional geometrical optimization to improve the offset quality and embedding, which we will describe in Section \ref{sec:optimization}. 

\paragraph{Input Preprocessing}
If the input simplicial complex is not embedded in a tetrahedral mesh, we embed it using TetWild \cite{hu2018tetrahedral}. Note that other algorithms could be used for this step, such as TetGen \cite{tetgen} or Robust CDT \cite{Diazzi2023}. The embedded complex is a subset of the triangles, edges, or vertices in the tetrahedral mesh tagged as belonging to the complex.

\begin{figure}
    \centering\footnotesize
    \includegraphics[width=0.32\linewidth]{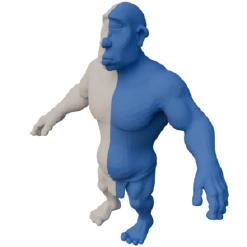}\hfill
    \includegraphics[width=0.32\linewidth]{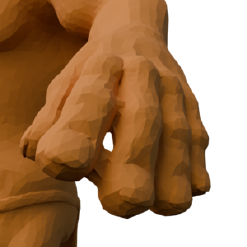}\hfill
    \includegraphics[width=0.32\linewidth]{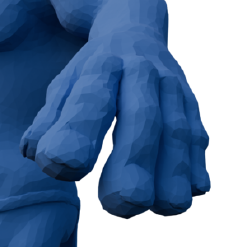}
    \parbox{0.32\linewidth}{\centering input}\hfill
    \parbox{0.32\linewidth}{\centering no simplicial embedding}\hfill
    \parbox{0.32\linewidth}{\centering simplicial embedding}
    \caption{By making the background mesh a simplicial embedding of the input (white), we guarantee that the topological offset (blue) is homeomorphic to the infinitesimal offset family. Without simplicial embedding, the topology might be different (orange).}
    \label{fig:why-simplicial-embedding}
\end{figure}

\paragraph{Step 1: Simplicial Embedding}
We create a simplicial embedding (\Cref{sec:simplicial_embedding}) using a sequence of local operations. The pseudocode for this step is described in Algorithm \ref{alg:simplicial_embedding}. This condition is essential to ensure the existence of the discrete counterpart of a topological offset (\Cref{fig:why-simplicial-embedding}).

\paragraph{Step 2: Offset Insertion}
We insert a topological offset in the background mesh using a topological variant of marching tetrahedra (\Cref{sec:insertion}). 

\paragraph{Step 3: Offset Optimization}
Optionally, we optimize the tetrahedral mesh (and consequently the offset) by increasing its quality and adapting for various applications, as described in \Cref{sec:optimization}.

\paragraph{Output Postprocessing}
The output of our algorithm is a tetrahedral mesh with an embedded offset surface. The background mesh can be discarded if the downstream application only needs the surface mesh.

\subsection{Step 1. Simplicial Embedding}
\label{sec:simplicial_embedding}

\begin{figure}
    \centering
    \includegraphics[width=.5\linewidth]{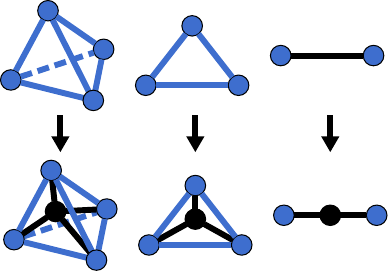}
    \caption{Split operation for a tetrahedron, triangle, and edge.}
    \label{fig:splitsimplex}
\end{figure}

We propose an algorithm to convert a simplicial complex embedded in a tetrahedral background mesh $M$ (such as the one generated by \cite{hu2018tetrahedral}) into a simplicial embedding by performing a sequence of edge splits  (\Cref{fig:splitsimplex}) to update $\M$. 
\Cref{alg:simplicial_embedding} iterates over every tetrahedron $t\in \M$, and checks if the boundary of $t$ is in $S$ (\cref{l:tet}). In that case, it splits the tetrahedron (\cref{l:split}). It then iterates the same procedure on all triangles (\cref{l:triangle}) and edges (\cref{l:edge}) that are not in $S$ (\Cref{fig:embedding_examples} shows some examples).
Note that a split to an edge or a face is propagated to the tetrahedra intersecting it.

\begin{algorithm}
\caption{\Call{simplicial\_embedding}{$\M, S$}:}\label{alg:cap}
\begin{algorithmic}[1]
\Require A mesh $\M$ with an embedded simplicial complex $S$
\Ensure $\M'$ is a simplicial embedding of $S$
\State $M' \gets M$
\For{each tetrahedron $t \in M'$} \label{l:tet}
\If{$t \cap S$ contains four triangles} 
    \State $\M' \gets$ \Call{split\_tetrahedron}{$\M',t,S$} \label{l:split}
\EndIf
\EndFor
\For{each triangle $t \in M'$} \label{l:triangle}
\If{$t \not\in S$ and $t \cap S$ contains three edges} 
    \State $\M' \gets$ \Call{split\_triangle}{$\M',t,S$}
\EndIf
\EndFor
\For{each edge $e \in M'$} \label{l:edge}
\If{$e \not\in S$ and $e \cap S$ contains two vertices} 
    \State $\M' \gets$ \Call{split\_edge}{$\M',e,S$}
\EndIf
\EndFor
\State \Return $M'$
\end{algorithmic}
\label{alg:simplicial_embedding}
\end{algorithm}

\begin{figure}
    \centering
    \includegraphics[width=.55\columnwidth]{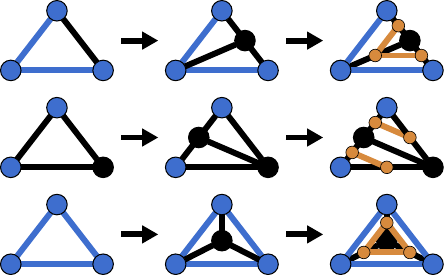}
    \caption{2D exemplary results for applying \Cref{alg:simplicial_embedding} to make $\M$ a simplicial embedding of $S$ (blue), and the resulting topological offsets (orange). For illustration purposes, we omit the simplicial decomposition of the polygons.}
    \label{fig:embedding_examples}
\end{figure}

\begin{theorem}[Simplicial Embedding of a Simplicial Complex]
\Cref{alg:simplicial_embedding} produces a mesh $\M$ such that $\M$ is a simplicial embedding of $S$.
\end{theorem}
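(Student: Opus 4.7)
The plan is to establish three invariants after the algorithm terminates, namely
\textbf{(A)} no tetrahedron has all four of its triangular faces in $S$,
\textbf{(B)} no triangle outside $S$ has all three of its edges in $S$, and
\textbf{(C)} no edge outside $S$ has both of its endpoints in $S$,
and then to show that \textbf{(A)}--\textbf{(C)}, together with the fact that $S$ is closed under taking faces, force $t\cap S$ to be either empty or a single vertex, edge, or triangle of $S$ for every tetrahedron $t$ of the output mesh. This last implication is the main obstacle.

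For termination and for the mutual compatibility of the three loops, the key observation is that every split introduces a new vertex in the relative interior of a simplex that does not belong to $S$, so the new vertex itself lies outside $S$. It follows that every edge incident to the new vertex has at most one endpoint in $S$, every new triangle incident to it has at most one edge in $S$, and every new tetrahedron incident to it has at most one face in $S$. Using the refinement patterns of \Cref{fig:splitsimplex} for the split and its propagation to incident higher-dimensional simplices, I would verify loop by loop that the tetrahedron loop terminates in a single sweep and establishes \textbf{(A)}; that the triangle loop terminates in a single sweep, establishes \textbf{(B)}, and its propagated tetrahedron splits never re-create a violation of \textbf{(A)}; and that the edge loop terminates in a single sweep, establishes \textbf{(C)}, and its propagated triangle and tetrahedron splits never re-create a violation of \textbf{(A)} or \textbf{(B)}.

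The core step is then the deduction of the simplicial embedding condition from \textbf{(A)}--\textbf{(C)}. The plan is a case analysis on the faces of a tetrahedron $t=abcd$ that lie in $S$. If $t$ contained two triangles of $S$, say $abc$ and $abd$ (any two faces of a tetrahedron share an edge), closure under faces would place all four vertices and five of the six edges into $S$; the sixth edge $cd$ then yields a contradiction either directly with \textbf{(C)} if $cd\notin S$, or with \textbf{(A)} after \textbf{(B)} is used to force $acd$ and $bcd$ into $S$ if $cd\in S$. Hence $t$ contains at most one triangle of $S$, and if it contains exactly one, say $abc$, then assuming $d\in S$ would force $ad,bd,cd\in S$ by \textbf{(C)} and hence $abd\in S$ by \textbf{(B)}, contradicting the previous step; so $d\notin S$ and closure under faces gives $t\cap S=abc$. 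When $t$ contains no triangle of $S$, two hypothetical edges of $S$ in $t$ (sharing a vertex or opposite) always force a third edge with both endpoints in $S$ to lie outside $S$, contradicting \textbf{(C)}; so $t$ contains at most one edge of $S$, and an analogous use of \textbf{(C)} forces all further vertices of $t$ outside $S$. In every subcase $t\cap S$ is empty or a single closed simplex of $S$, which is precisely the simplicial embedding condition of \Cref{def:simpembedding}.
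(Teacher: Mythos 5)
Your proposal is correct and follows essentially the same route as the paper's proof: the three loops establish exactly your invariants \textbf{(A)}--\textbf{(C)} (no tetrahedron with four faces in $S$, no non-$S$ triangle with three edges in $S$, no non-$S$ edge with both endpoints in $S$), and the paper then performs the same case analysis on the possible combinations of faces, edges, and vertices of a tetrahedron to conclude the embedding condition. Your explicit check that propagated splits cannot re-create earlier violations (because each new vertex lies outside $S$) is a point the paper leaves implicit, so your write-up is, if anything, slightly more careful.
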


\begin{proof}
After the first loop of \Cref{alg:simplicial_embedding} completes, no tetrahedron has four faces in $S$. If a tetrahedron has three or two faces in $S$, then there is a face or faces that have three edges in $S$, and these faces will be split in the second loop.  Therefore, after the first two loops, all tetrahedra have no more than one face in $S$.   
Suppose a tetrahedron has a face and two or more edges not contained in this face in $S$. 
Then there are triangles not in $S$ with three edges in $S$ that will be split in the second loop, i.e., there is no more than one edge in $S$ after the second loop. 
Thus, after the first two loops, a tetrahedron may contain at most one face of $S$ and at most one edge not contained in this face, and some number of vertices of $S$.  
Similarly, in the last loop, if there is a face and an edge, or a face and a vertex, or an edge and a vertex, or two vertices,  there will be always an edge not in $S$ with two vertices in $S$ that will be split in the third loop.  
  We conclude that the output is a simplicial embedding of $S$. 

\end{proof}

\begin{figure}
    \centering\footnotesize
    \includegraphics[width=0.9\linewidth]{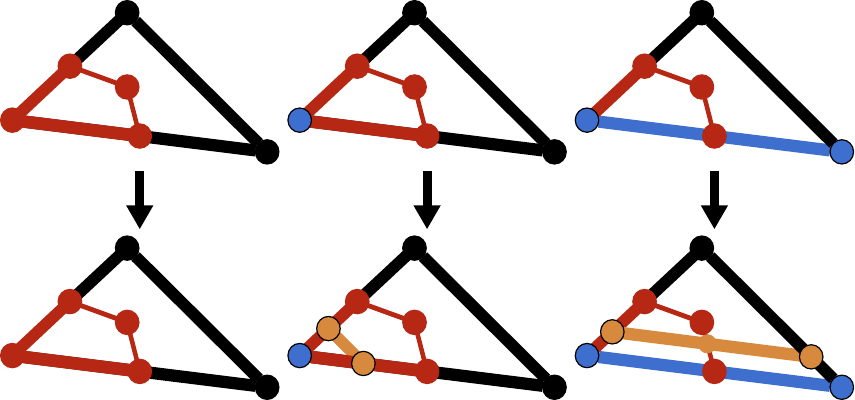}\\
    \parbox{0.3\linewidth}{\centering (1)}
    \parbox{0.3\linewidth}{\centering (2)}
    \parbox{0.3\linewidth}{\centering (3)}
    \caption{Rules to generate the topological offset in 2D. For illustration purposes, we omit the simplicial decomposition of the polygons. A convex cell as it is used in \Cref{thm:dtov} is depicted in red.}
    \label{fig:rules_2d}
\end{figure}

\subsection{Step 2. Offset Insertion}
\label{sec:insertion}
We insert the discrete topological offset into the background mesh by using a binary version of marching tetrahedra \cite{gueziec1995exploiting}: every tetrahedron that contains a vertex, an edge, or a triangle of the input simplicial complex is partitioned using the rules in Figure \ref{fig:rules}. We do not perform an interpolation as suggested in \cite{gueziec1995exploiting}, but we just place the inserted vertices at the midpoint of the split edges.

\begin{theorem}
The application of the rules in Figure \ref{fig:rules} to a simplicial complex $S$ simplicially embedded in a background mesh $M$ is a discrete topological offset homeomorphic to a continuous one.
\label{thm:homeomorphic}
\end{theorem}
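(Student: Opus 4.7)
The plan is to establish the homeomorphism cell-by-cell using the convex partition $\mathcal{V}_t(s)$ introduced in the proof of \Cref{thm:dtov}, and then glue the local pieces along shared boundaries. First, I would verify that the marching-tetrahedra rules of \Cref{fig:rules}, restricted to any vertex cell $\mathcal{V}_t(v)$, produce exactly the same combinatorial configuration that \Cref{thm:dtov} identifies for the infinitesimal offset: inside $\mathcal{V}_t(v)$, the discrete surface is empty when $S \cap \mathcal{V}_t(v) = \emptyset$ and is a single polygonal disk otherwise. Since the rules insert vertices at the midpoints of exactly those edges of $\mathcal{V}_t(v)$ that are also crossed by the infinitesimal offset (as enumerated in the four cases of \Cref{thm:dtov}), a direct case analysis on the configurations of \Cref{fig:rules} intersected with the four cells of each tetrahedron suffices.

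Next, I would establish consistency across shared cell boundaries. Because $M$ is a simplicial embedding, the rule applied in two tetrahedra sharing a face depends only on which vertices, edges, and triangle of that face lie in $S$, so the inserted offset agrees on the shared face. The same argument then localizes to the convex cells $\mathcal{V}_t(v)$: two adjacent cells in the same or neighboring tetrahedra share a quadrilateral whose discrete trace is determined purely by which of its edges are crossed, and the infinitesimal trace is determined by the same data via \Cref{lemma:locality}. Thus the two offsets enter and leave each cell through the same edges at combinatorially identical point configurations, and their restrictions to $\partial \mathcal{V}_t(v)$ are each a simple closed curve with the same underlying edge sequence.

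Now I would build the homeomorphism. In each cell where the offset is nonempty, both the infinitesimal patch $\partial \mathcal{S}_\epsilon(\tau_S(v)) \cap \mathcal{V}_t(v)$ and the discrete patch are disks with boundary in $\partial \mathcal{V}_t(v)$, so I can use a radial map centered at the projection point already chosen in the proof of \Cref{thm:dtov} (the vertex in case (2), an endpoint of the edge in case (3), a vertex of the triangle in case (4)). Ray-casting from this center gives a continuous injective map between the two disks: on the boundary it is the explicit piecewise bijection to the polygonal curve just described, and the hypotheses of \Cref{lemma:homeomorphism} (closed, bounded, boundary-matching, inverse continuous on the image) are satisfied, so the map extends to a homeomorphism of disks. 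Because the boundary correspondences on each shared quadrilateral are prescribed once and for all by the edge-midpoint data, the local homeomorphisms glue into a global homeomorphism $\hat c$ between the discrete offset and $\partial \mathcal{S}_\epsilon(S)$ for any $\epsilon \in (0,\hat\epsilon)$. The discrete offset does not touch $S$, since its vertices sit at edge midpoints of $M$ and, by the simplicial embedding property, no such midpoint lies on $S$; hence the output satisfies \Cref{def:topoffset}.

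The main obstacle is ensuring that the radial maps in neighboring cells can be chosen so that they agree simultaneously on every shared boundary: in case (2) the projection base is a vertex of $S$, while in the adjacent case (3) or (4) cell it is a point on an edge or triangle of $S$, and the two induced parametrizations of the common boundary curve must be reconciled. This reduces to showing that two radial reparametrizations of the same simple polygonal curve with matching endpoints are isotopic rel endpoints, which is standard for curves in a convex region, and thus the local homeomorphisms can be adjusted by a boundary-preserving isotopy of each disk to coincide on shared boundaries. Aside from this bookkeeping, every topological ingredient needed has already been provided by \Cref{lemma:locality}, \Cref{lemma:stardomain}, \Cref{lemma:homeomorphism}, and \Cref{thm:dtov}.
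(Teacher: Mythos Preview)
Your proposal is correct and follows essentially the same route as the paper: both arguments invoke \Cref{thm:dtov} to identify the continuous offset as a topological disk in each relevant region and then match it against the polygonal disk produced by the marching-tetrahedra rules. The paper is terser---it glues the per-cell disks of \Cref{thm:dtov} into a single disk per tetrahedron and compares there, leaving the global gluing across tetrahedra implicit---whereas you work at the finer cell granularity and supply the explicit radial construction and boundary-reconciliation bookkeeping that the paper omits.
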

\begin{proof}

\textbf{Case (1)}:
No elements are added in this case.

\textbf{Case (2)}:
$\mathcal{S}_\epsilon(S) \cap \mathcal{V}_t(v_0)$ is homeomorphic to the orange triangle in \Cref{fig:rules} case (2) because it is also a disk intersecting the same simplices. All other cells in the tetrahedron $t$ do not intersect $\mathcal{S}_\epsilon(S)$.
Thus, the rule constructs a disk homeomorphic to $\mathcal{S}_\epsilon(S)$ within $t$.

\textbf{Case (3)}:
The cells $\mathcal{V}_t(v_0)$ and $\mathcal{V}_t(v_1)$ have a non-empty intersection with $\mathcal{S}_\epsilon(S)$, and the offset $\partial\mathcal{S}_\epsilon(S)$ within every cell of the tetrahedron $t$ is a disk (\Cref{thm:dtov}). The cells share the face $BFGC = \mathcal{V}_t(e_{01})$ and the disks' boundaries intersect that face in the curve $\partial\mathcal{S}_\epsilon(S)\cap \mathcal{V}_t(e_{01})$. It follows that the union of the two disks is also a disk. The orange polygon (that can be decomposed into two triangles) in \Cref{fig:rules} case (3) is also a disk and thus homeomorphic to $\mathcal{S}_\epsilon(S)$ within $t$.

\textbf{Case (4)}:
The cells $\mathcal{V}_t(v_0)$, $\mathcal{V}_t(v_1)$, and $\mathcal{V}_t(v_2)$ have a non-empty intersection with $\mathcal{S}_\epsilon(S)$, and the offset $\partial\mathcal{S}_\epsilon(S)$ within every cell of the tetrahedron $t$ is a disk (\Cref{thm:dtov}). The cells share faces in the same manner as in case (2), e.g., $\mathcal{V}_t(v_0)$ and $\mathcal{V}_t(v_2)$ share the face $EFGH = \mathcal{V}_t(e_{02})$. It follows that the union of the three disks is also a disk. The orange triangle in \Cref{fig:rules} case (4) is also a disk and thus homeomorphic to $\mathcal{S}_\epsilon(S)$ within $t$.

\end{proof}

The mesh, composed of all polygons constructed from the rules in \Cref{fig:rules}, is manifold, as it is homeomorphic to an infinitesimal offset (\Cref{thm:homeomorphic}) which is manifold (\Cref{thm:infinitesimal}).

While we prove the correctness of our method only for a simplicial complex embedded in a tetrahedral mesh, it is possible to prove a similar result in 2D, which leads to the patterns in \Cref{fig:rules_2d}. We show an example of the 2D algorithm in \Cref{fig:hello-world}, where the input is a sequence of marked edges in a triangle mesh. 
\begin{figure}
    \centering
    \includegraphics[width=0.9\linewidth]{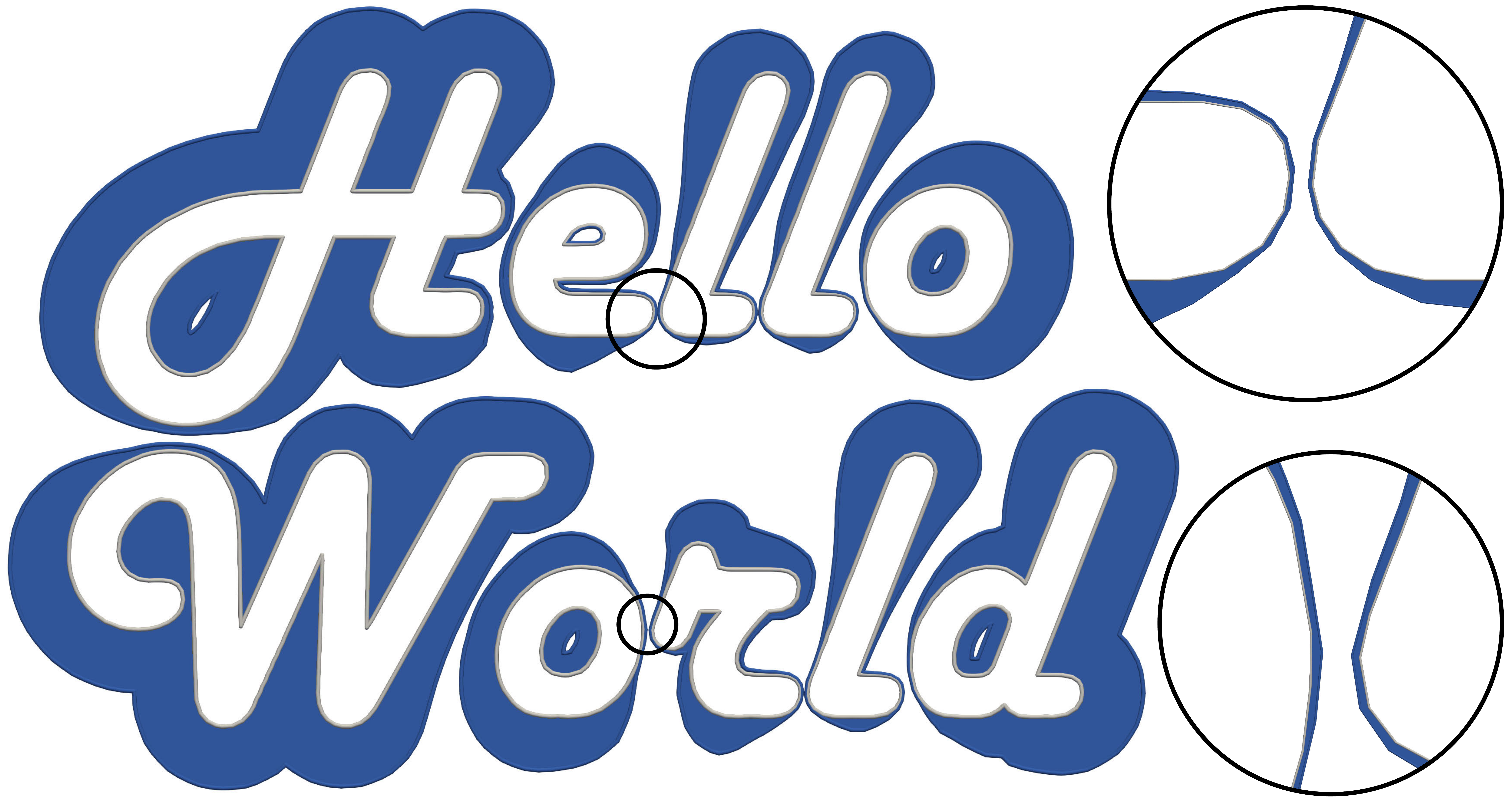}
    \caption{Our method can also construct topological offsets in 2D.}
    \label{fig:hello-world}
\end{figure}

We note that \Cref{alg:simplicial_embedding} is purely topological. The coordinates of the vertices of $\M$ are never used, which makes our algorithm unconditionally robust. We note, however, that the computation of the vertex positions after a split might lead to inverted elements if the rounding error is larger than the length of the edge. We show an example of such a problematic case in \Cref{fig:failure-case}, and note that this only happened once in our large-scale evaluations (\Cref{sec:results}). See \Cref{sec:robustness} for additional details on this problem.

\subsection{Step 3. Offset Optimization}
\label{sec:optimization}

The inserted topological offset depends on the resolution of the background mesh: a finer mesh will create an offset closer to the input surface while a coarser mesh will have an offset further away. To remove this dependence from the background mesh and to allow customization of the embedded offset, we introduce an optimization algorithm that modifies the connectivity and geometry of the offset while preserving its topology and avoiding self-intersections and intersections with the input. 

\begin{figure}
    \centering
    \includegraphics[width=0.3\linewidth]{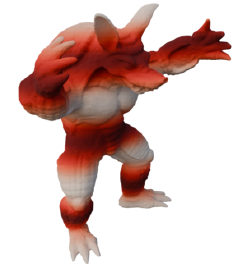}\hfill
    \includegraphics[width=0.3\linewidth]{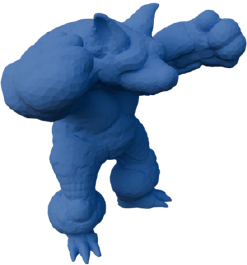}\hfill
    \includegraphics[width=0.3\linewidth]{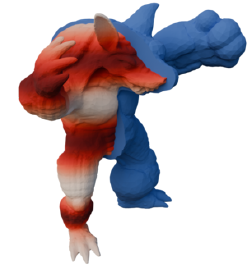}
    \caption{A varying offset distance applied to the armadillo.}
    \label{fig:spatially_varying_offset}
\end{figure}

\paragraph{User Parameters}
While the generation of a topological offset does not require any parameters, 
we introduce a primary parameter to control desired distance $\delta$ of the offset from the input simplicial complex (which could be spatially adaptive, \Cref{fig:spatially_varying_offset}). 
We note that this is a desideratum: it is, in general, impossible to guarantee an arbitrary distance while preserving the infinitesimal offset topology (\Cref{fig:knot_adaptation}). 
\begin{figure}
    \centering\footnotesize
    \includegraphics[width=0.19\linewidth]{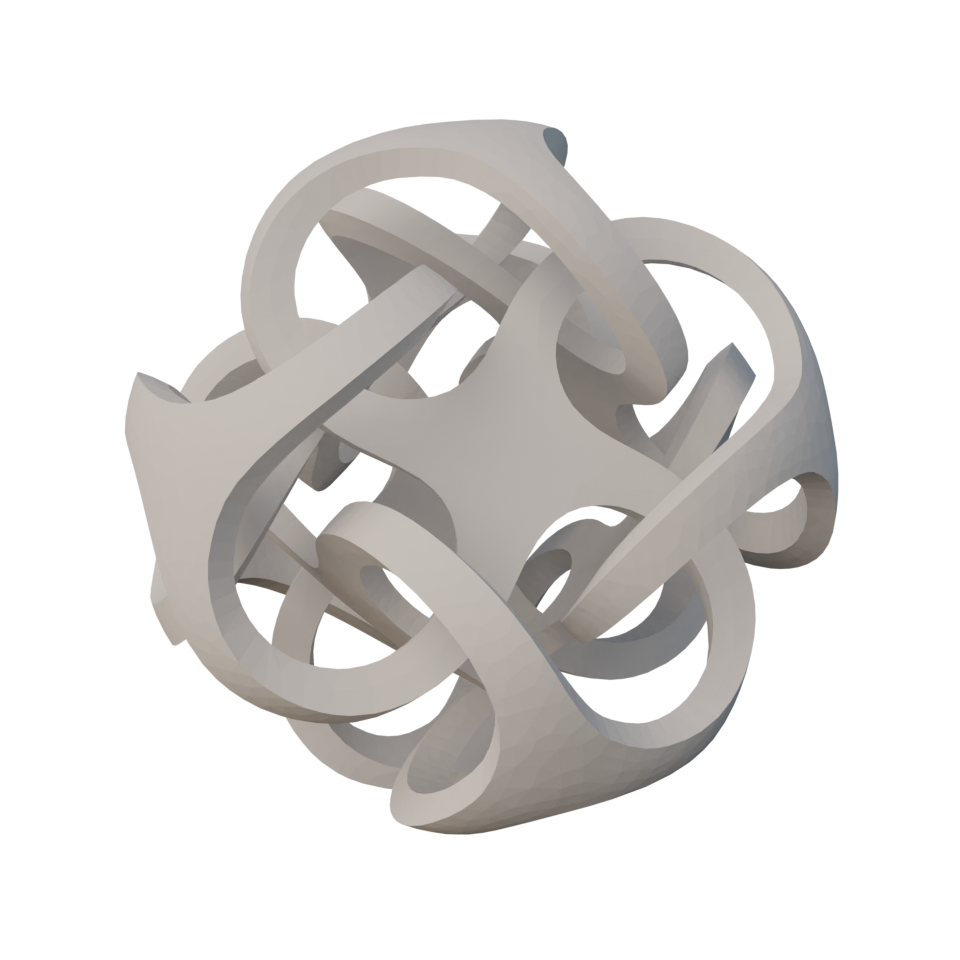}\hfill
    \includegraphics[width=0.19\linewidth]{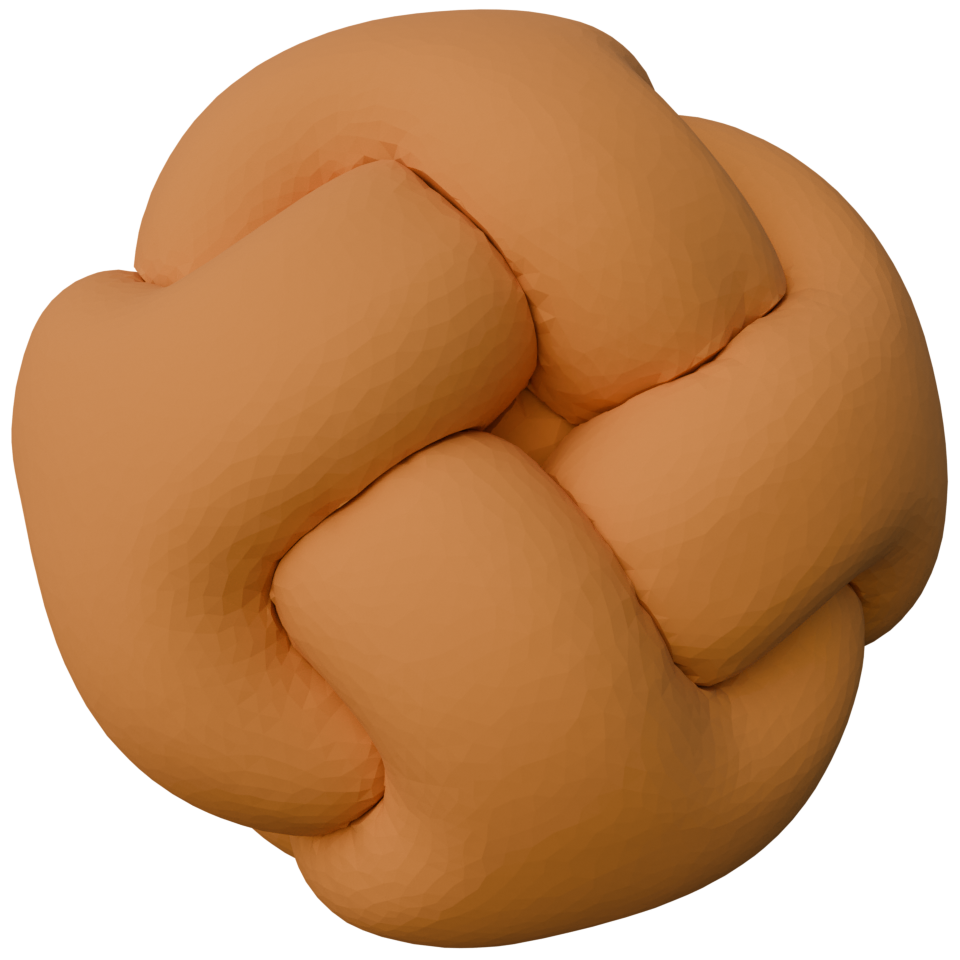}\hfill
    \includegraphics[width=0.19\linewidth]{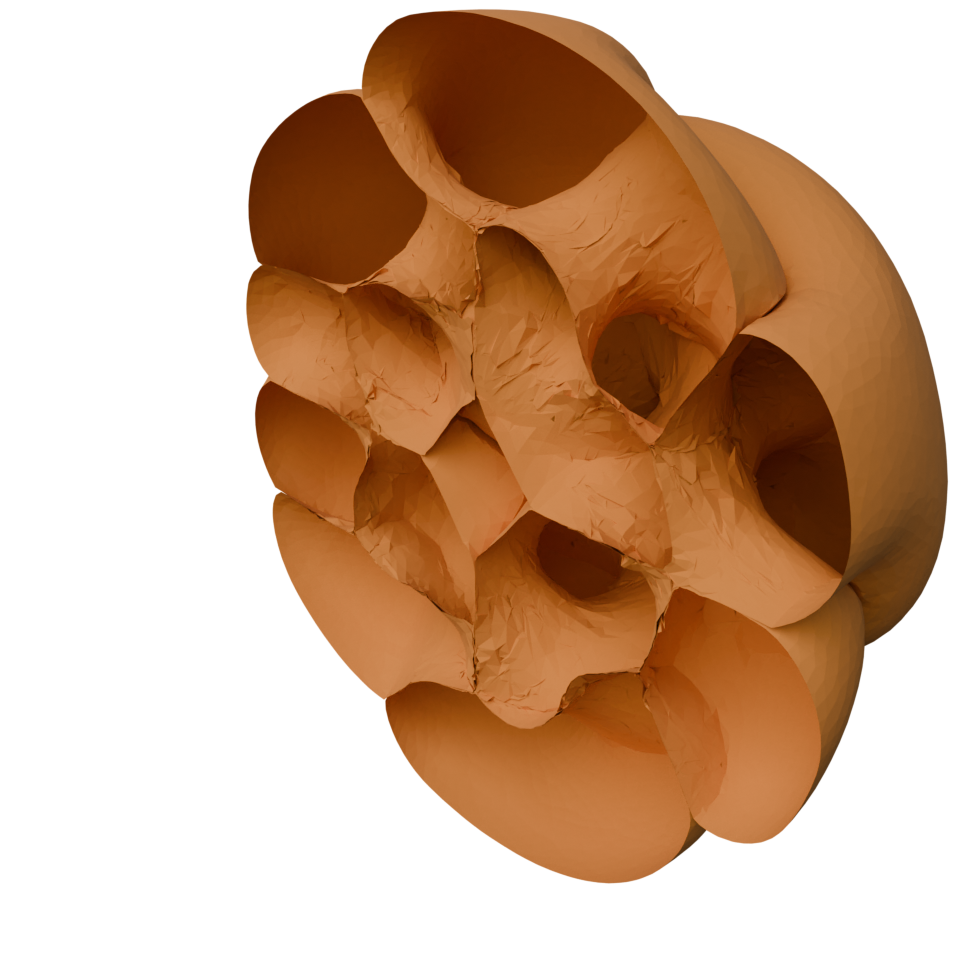}\hfill
    \includegraphics[width=0.19\linewidth]{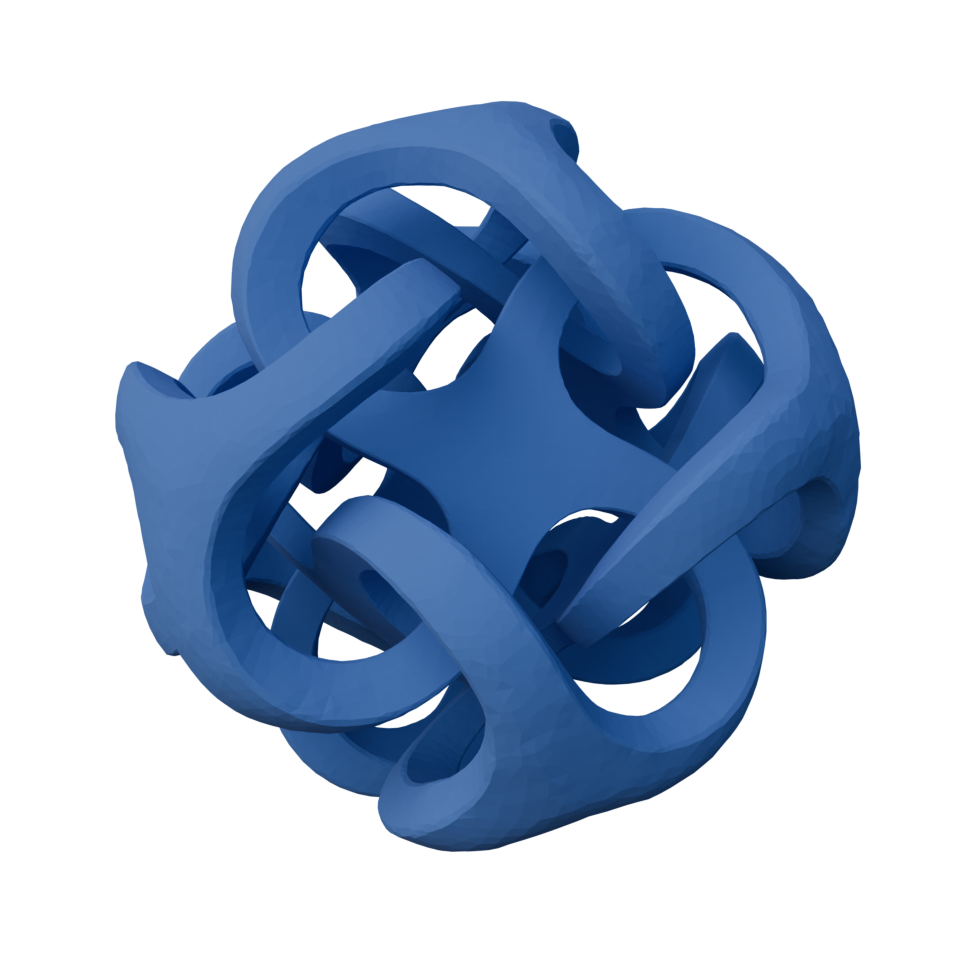}\hfill
    \includegraphics[width=0.19\linewidth]{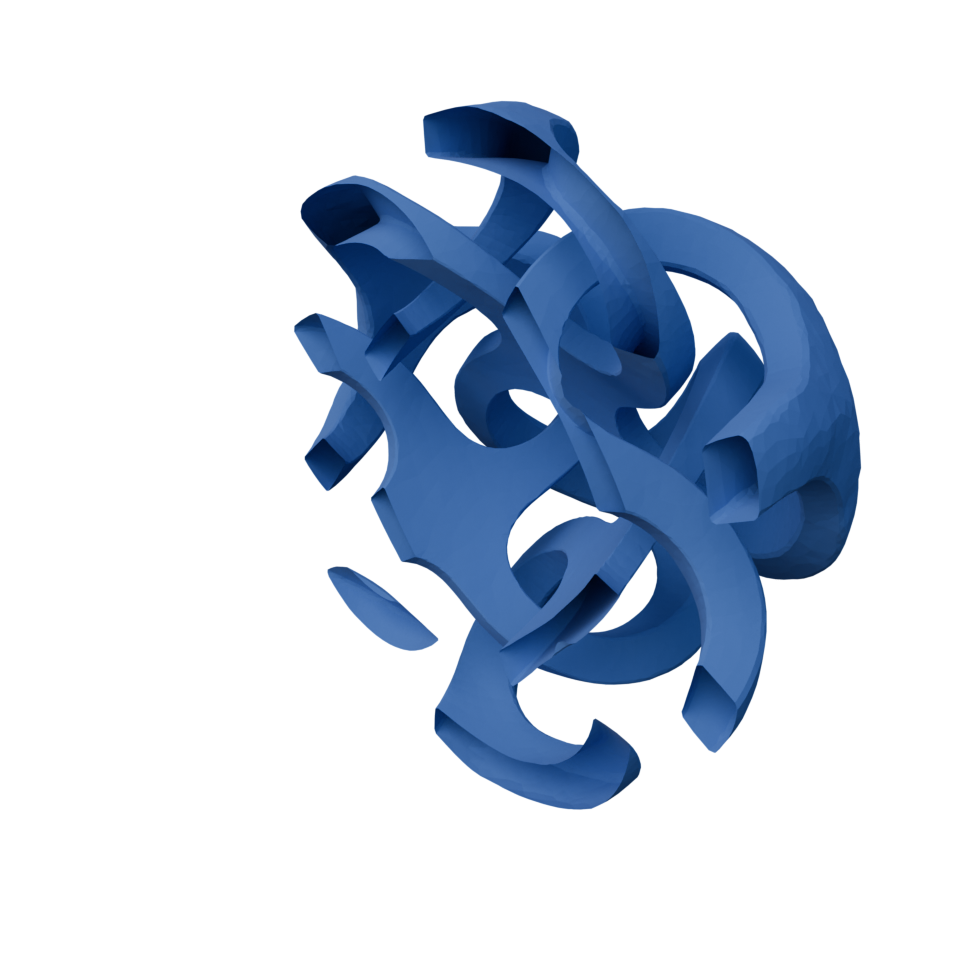}
    \parbox{0.19\linewidth}{\centering input}\hfill
    \parbox{0.38\linewidth}{\centering uniform offset distance}\hfill
    \parbox{0.38\linewidth}{\centering adapted offset distance}
    \caption{We prevent offset from getting too close (orange) by locally adapting the offset distance (blue).}
    \label{fig:knot_adaptation}
\end{figure}
The other parameters control the tradeoff between geometrical accuracy and running time: (1) the maximum normal deviation $\sigma_\text{max}=15^\circ$ which controls how well the mesh should adapt to the offset curvature (\Cref{subsec:quality_metrics}), (2) $\sigma_\text{min}=2^\circ$ which controls when the offset curvature is considered planar, (3) $\ell_\text{max} = \infty$ and (4) $\ell_\text{min} = 2 \delta \sin(\sigma_\text{max})$ the maximum/minimum edge length. 

Increasing the maximum normal deviation leads to a less smooth offset surface (\Cref{fig:parameters}), while increasing $\ell_\text{min}$ leads to coarse results (\Cref{fig:coarse_offset}). Finally, our method can generate \emph{outside} and \emph{inside}
offsets if the input is closed and oriented
(\Cref{fig:inside-outside}). 

\begin{figure}
    \centering
    \includegraphics[width=\linewidth]{fig/parameter_study/params.pdf}
    \caption{Effect of the offset distance and normal deviation on the output.}
    \label{fig:parameters}
\end{figure}

\begin{figure}
    \centering\footnotesize
    \parbox{0.24\linewidth}{\centering
        \includegraphics[width=\linewidth]{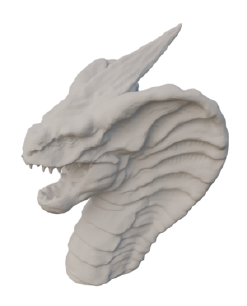}
        input\\ $\#t = 48\,126$
    }\hfill
    \parbox{0.24\linewidth}{\centering
        \includegraphics[width=\linewidth]{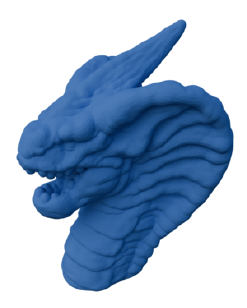}
        $\delta = 1\%$\\ $\#t = 100\,442$
        \includegraphics[width=\linewidth]{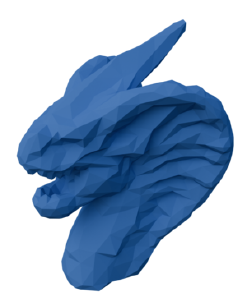}
        $\delta = 1\%$\\ $\#t = 2\,518$
    }\hfill
    \parbox{0.24\linewidth}{\centering
        \includegraphics[width=\linewidth]{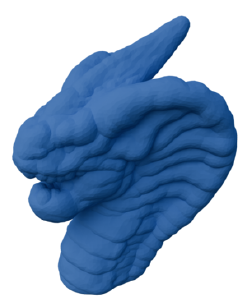}
        $\delta = 2\%$\\ $\#t = 32\,082$
        \includegraphics[width=\linewidth]{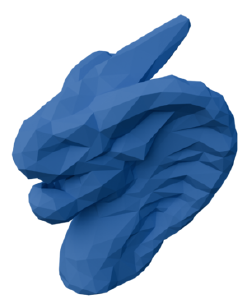}
        $\delta = 2\%$\\ $\#t = 1\,554$
    }
    \hfill
    \parbox{0.24\linewidth}{\centering
        \includegraphics[width=\linewidth]{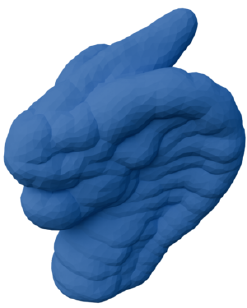}
        $\delta = 4\%$\\ $\#t = 10\,298$
        \includegraphics[width=\linewidth]{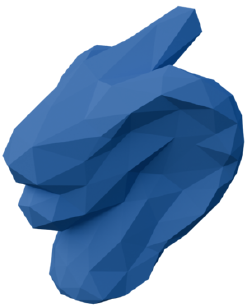}
        $\delta = 4\%$\\ $\#t = 458$
    }
    \caption{Setting $\ell_\text{min} = 100\%$ (relative to the bounding box diagonal) and $\sigma_\text{max} = 90^\circ$ leads to coarse topological offsets. The top row shows the offset meshes with default parameters.}
    \label{fig:coarse_offset}
\end{figure}

\begin{figure}
    \centering\footnotesize
    \includegraphics[width=0.32\linewidth]{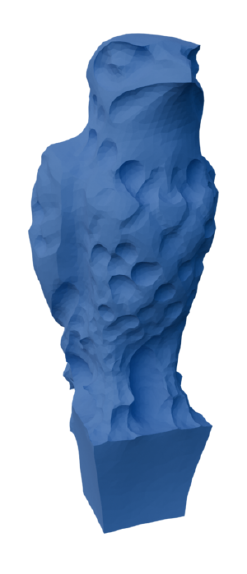}\hfill
    \includegraphics[width=0.32\linewidth]{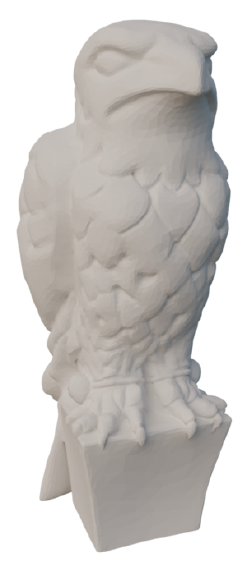}\hfill
    \includegraphics[width=0.32\linewidth]{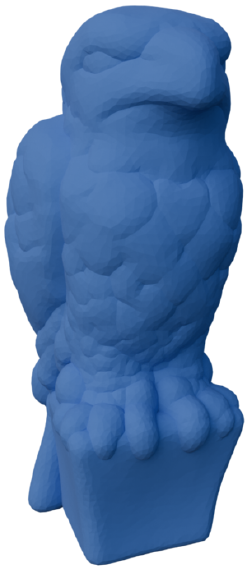}
    \parbox{0.32\linewidth}{\centering $-2\,\%$}\hfill
    \parbox{0.32\linewidth}{\centering input}\hfill
    \parbox{0.32\linewidth}{\centering $+2\,\%$}
    \caption{For closed surfaces, our method can generate inside (left) or outside (right) offsets.}
    \label{fig:inside-outside}
\end{figure}

\paragraph{Termination}
The optimization terminates if the average $\sigma_\text{max}$ does not decrease by more than $0.5^\circ$ across iterations, if both the maximum and mean distance error do not decrease by more than $0.5\% \delta$, or after 10 iterations.

\paragraph{Optimization Overview.}
The optimization algorithm proceeds in 3 steps (\Cref{fig:siggraph_logo_adaptive}): we estimate a spatially adaptive distance $\hat{\delta}$ that is as close as possible to $\delta$ (\Cref{sec:distance-adaptation}), we then use a topological marching front algorithm to modify the offset to be as close as possible to $\hat{\delta}$ (\Cref{sec:conservative-estimation}), and finally optimize it with a set of topological and geometrical local operations (\Cref{sec:off-opt}). 

\subsubsection{Distance Adaptation}
\label{sec:distance-adaptation}

\begin{figure}
    \centering\footnotesize
    \includegraphics[width=0.32\linewidth]{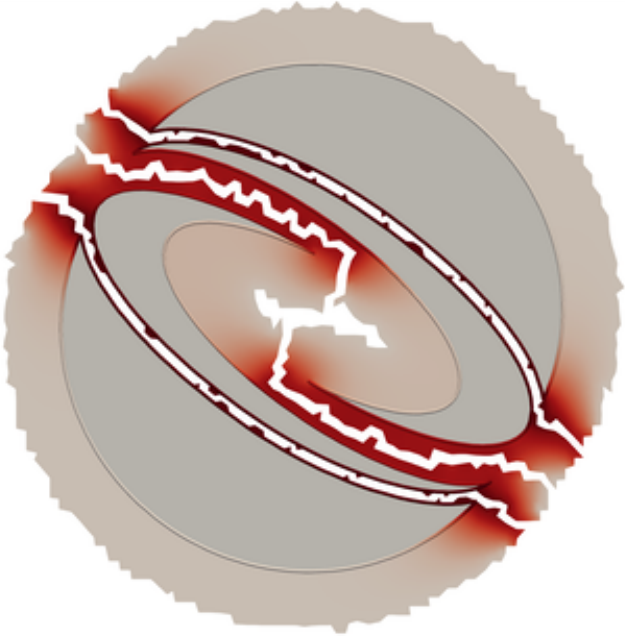}\hfill
    \includegraphics[width=0.32\linewidth]{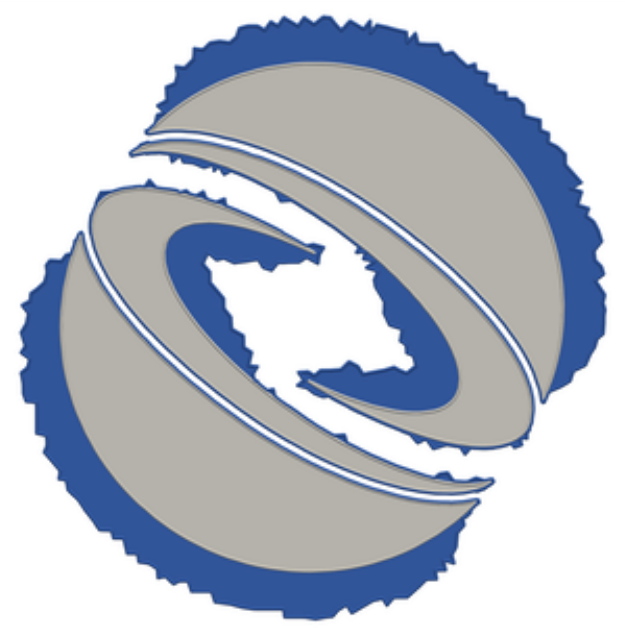}\hfill
    \includegraphics[width=0.32\linewidth]{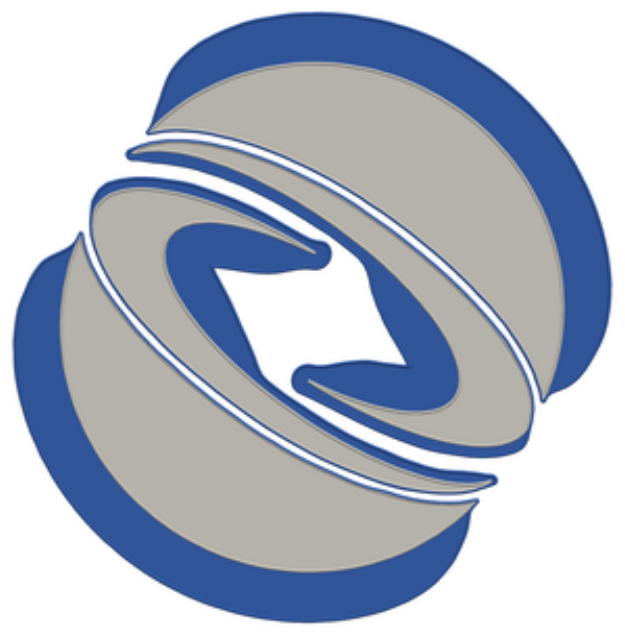}\\
    \parbox{0.32\linewidth}{\centering Distance Adaptation}\hfill
    \parbox{0.32\linewidth}{\centering Conservative Estimation}\hfill
    \parbox{0.32\linewidth}{\centering Optimization}
    \caption{Overview of our topological offset optimization.}
    \label{fig:siggraph_logo_adaptive}
\end{figure}

As we enforce the infinitesimal offset topology, it might happen that two different parts of the offset collide during optimization (\Cref{fig:knot_adaptation}). We do the best effort to avoid such collisions by locally reducing the offset distance.
After initializing the topological offset, we greedily expand the offset volume without changing its topology.
We then compute the distance of the expanded volume's boundary to the input, propagate it back to the input, and eventually use this newly computed distance as our spatially varying distance $\hat{\delta}$.

\paragraph{Greedy Expansion}
To estimate the maximum possible offset distance, we grow the offset volume using a marching front algorithm.
We add tetrahedra to the offset volume (the volume enclosed by the offset surface and the input) if they do not modify the topology of the offset surface. That is, a tetrahedron may only be added to the offset volume if it intersects the surface in one, two, or three faces and their incident edges and vertices. 
We initialize a queue with all tetrahedra that are face adjacent to the offset volume. If a tetrahedron is added to the volume, its face-adjacent neighbors are added to the queue.

A tetrahedron $t$ can only be added to the offset volume if it is within the offset distance $\delta$. 
We conservatively approximate $\delta$ by enclosing $t$ into a sphere (\Cref{fig:distance_estimation}). If the sphere is farther than $\delta$, $t$ is outside. In the other case, we decompose $t$ into 8 sub-spheres and repeat the procedure until either all spheres touching $t$ are outside, one sphere is inside, or the radius is smaller than $10\%\delta$. In the latter cases, we conservatively consider $t$ to be inside. The output of this procedure is a region of space around the input that is homeomorphic to the offset volume.

\begin{figure}
    \centering\footnotesize
    \includegraphics[width=\linewidth]{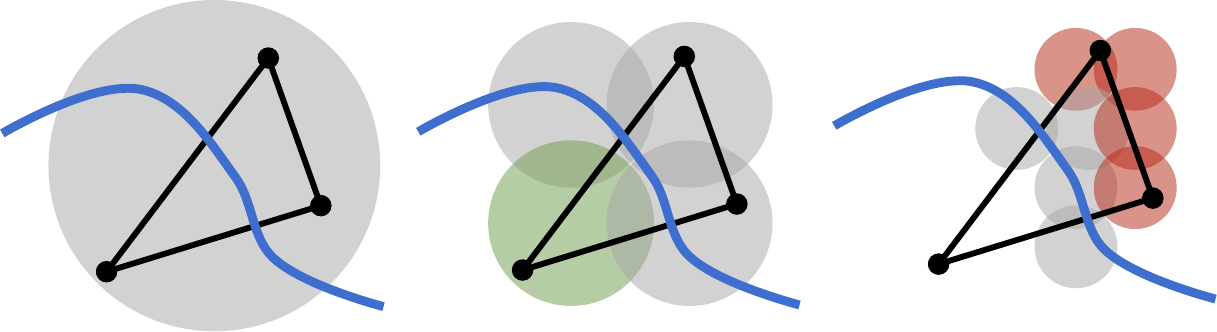}
    \caption{Circle approximation of a triangle to check if it is inside (green) or outside (red) the offset volume. A circle (grey) is subdivided if it intersects the offset surface (blue).}
    \label{fig:distance_estimation}
\end{figure}

\paragraph{Distance Propagation}
We start by assigning the target distance $\delta$ to every vertex on the boundary of the expanded offset volume. Then, we select every tetrahedron vertex adjacent to the expanded offset volume, compute the distance between its barycenter and the input, and update the assigned distance value by selecting the smallest of the values. 
To propagate this distance from the boundary to the input surface, we use harmonic interpolation on the background mesh. 
We store the solution on the vertices of the input simplicial complex as adapted offset distance $\hat{\delta}$.

We discard the expanded offset volume after the distance propagation and re-compute it using the new $\hat{\delta}$ to improve the geometry of the topological offset before optimization.

\begin{figure}
    \centering\footnotesize
   \parbox{.02\linewidth}{\rotatebox{90}{\centering $\delta_\text{avg}$}}\hfill\hfill
    \parbox{.97\linewidth}{\includegraphics[width=\linewidth]{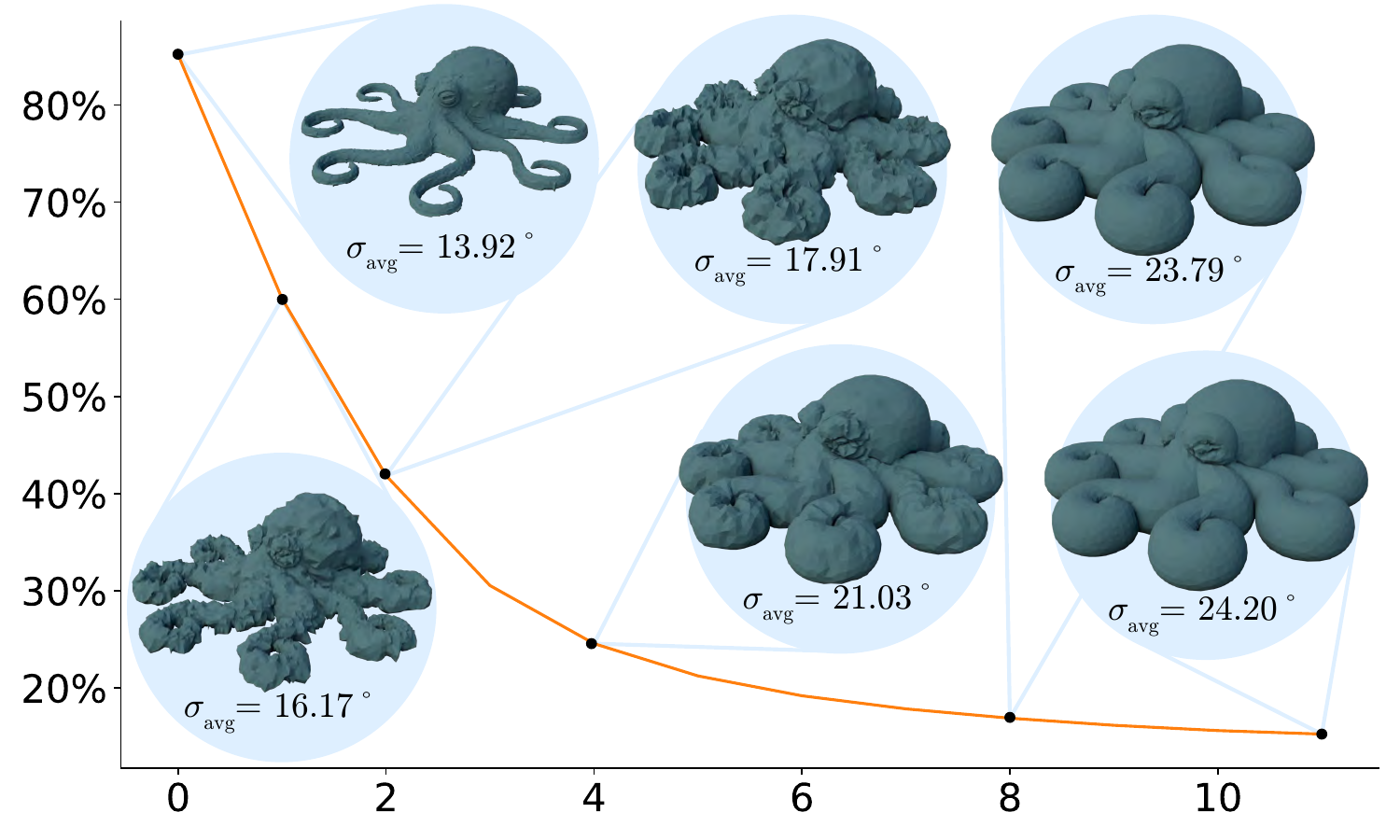}}
    \parbox{.02\linewidth}{~}\hfill\hfill
    \parbox{.97\linewidth}{\centering Iterations}
    \caption{Iterations of the optimization of a topological offset without proper initialization. While the algorithm manages to lower $\delta_\text{avg}$, it plateaus around $15\%$. With proper initialization, it converges to $1.6\%$ in 5 iterations (\Cref{fig:optimization-with-init}).}
    \label{fig:optimization-without-init}
\end{figure}

\subsubsection{Conservative Estimation}
\label{sec:conservative-estimation}
Our goal is to expand the offset volume such that its boundary (the offset surface) is as close as possible to the adapted offset distance $\hat{\delta}$.
We expand the offset volume using the marching front method described in \Cref{sec:distance-adaptation}, but we modify the distance approximation to make it conservative.
A tetrahedron $t$ is only added if it is completely within distance $\hat{\delta}$. Similar to \Cref{sec:distance-adaptation}, we measure the distance by enclosing $t$ into a sphere. If the sphere is within $\hat{\delta}$, $t$ is inside. In the other case, we decompose $t$ into 8 sub-spheres and repeat the procedure until either all spheres touching $t$ are inside, one sphere is outside, or the radius is smaller than $10\%\delta$. In the latter cases, we conservatively consider $t$ to be outside. 
Note that here, the expansion is conservative and not greedy as in \Cref{sec:distance-adaptation}.

The conservative estimation does not just improve performance, but also avoids the optimization to get stuck in local minima (\Cref{fig:optimization-without-init}).

\begin{figure}
    \centering\footnotesize
   \parbox{.02\linewidth}{\rotatebox{90}{\centering $\delta_\text{avg}$}}\hfill\hfill
    \parbox{.97\linewidth}{\includegraphics[width=\linewidth]{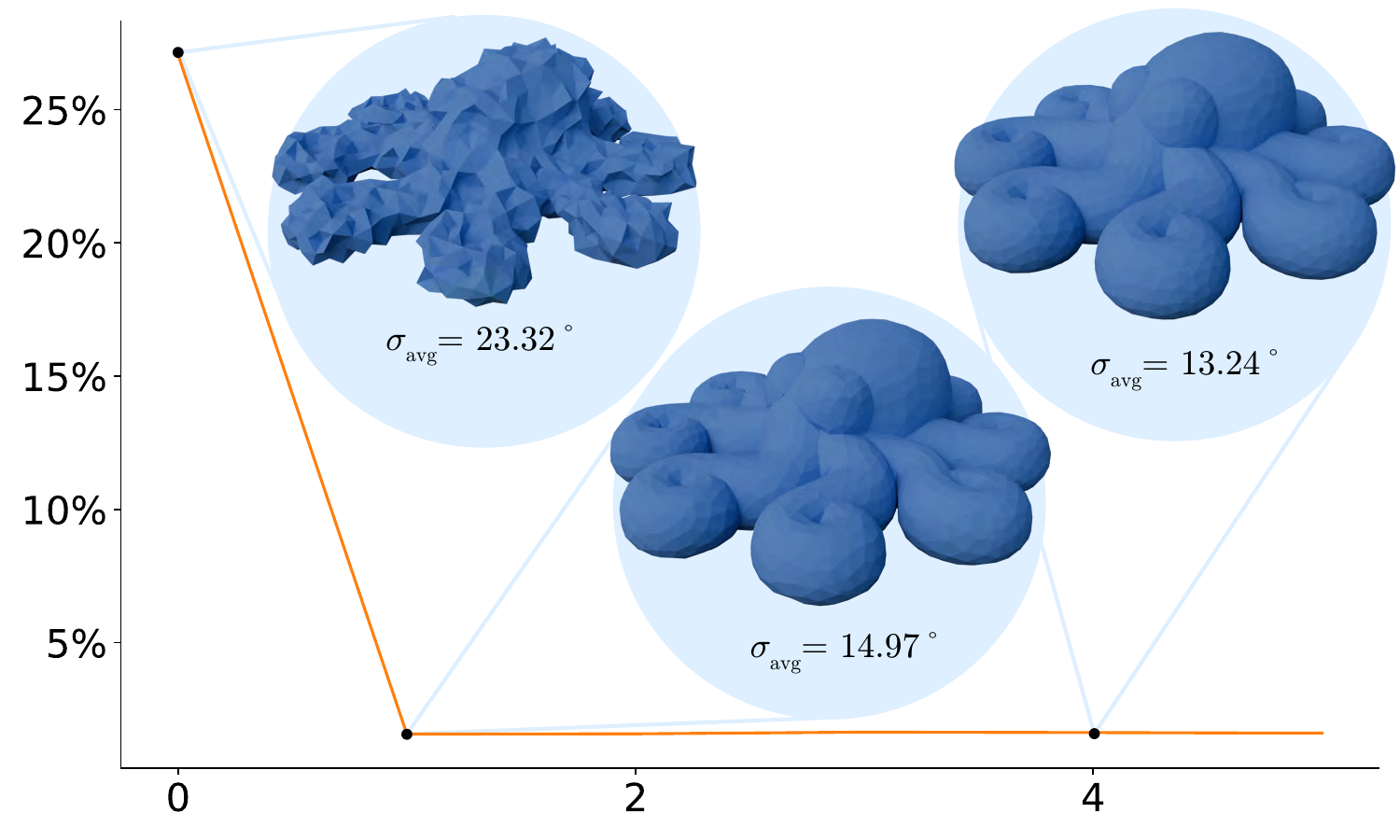}}
    \parbox{.02\linewidth}{~}\hfill\hfill
    \parbox{.97\linewidth}{\centering Iterations}
    \caption{Iterations of the optimization of a topological offset. The average relative distance error $\epsilon_\text{avg}$ and normal deviation $\sigma_\text{avg}$ converge very fast when the offset was properly initialized.}
    \label{fig:optimization-with-init}
\end{figure} 

\begin{figure*}
    \centering\footnotesize
    \rotatebox{90}{\centering dalek}
    \parbox{.18\linewidth}{\includegraphics[width=\linewidth]{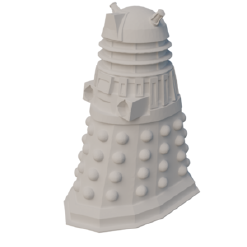}}
    \parbox{.18\linewidth}{\includegraphics[width=\linewidth]{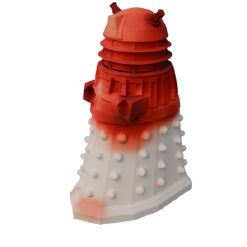}}
    \parbox{.18\linewidth}{\includegraphics[width=\linewidth]{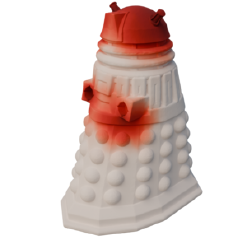}}
    \parbox{.18\linewidth}{\includegraphics[width=\linewidth]{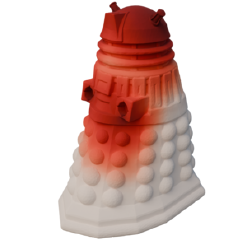}}
    \parbox{.18\linewidth}{\includegraphics[width=\linewidth]{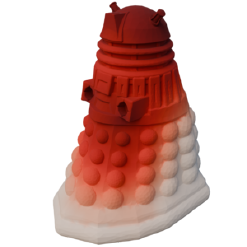}}\\
    \rotatebox{90}{\centering rooster}
    \parbox{.18\linewidth}{\includegraphics[width=\linewidth]{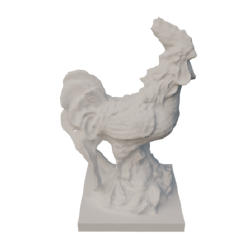}}
    \parbox{.18\linewidth}{\includegraphics[width=\linewidth]{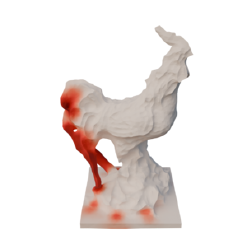}}
    \parbox{.18\linewidth}{\includegraphics[width=\linewidth]{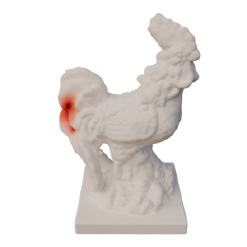}}
    \parbox{.18\linewidth}{\includegraphics[width=\linewidth]{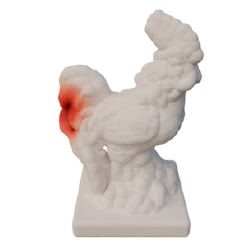}}
    \parbox{.18\linewidth}{\includegraphics[width=\linewidth]{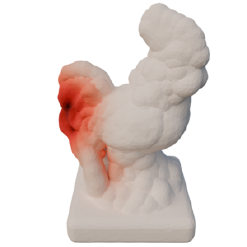}}\\
    \rotatebox{90}{\centering lamp}
    \parbox{.18\linewidth}{\includegraphics[width=\linewidth]{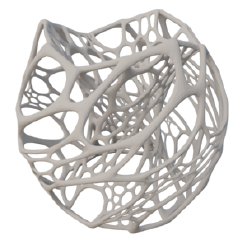}}
    \parbox{.18\linewidth}{\includegraphics[width=\linewidth]{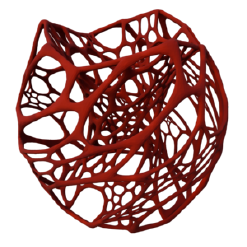}}
    \parbox{.18\linewidth}{\includegraphics[width=\linewidth]{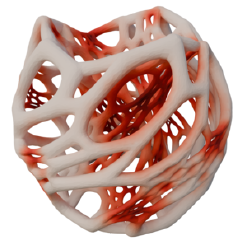}}
    \parbox{.18\linewidth}{\includegraphics[width=\linewidth]{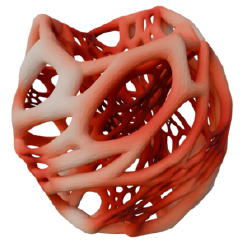}}
    \parbox{.18\linewidth}{\includegraphics[width=\linewidth]{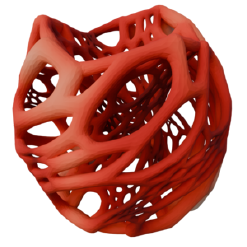}}\\
    \parbox{0.18\linewidth}{\centering input}
    \parbox{0.18\linewidth}{\centering -1\%}
    \parbox{0.18\linewidth}{\centering 1\%}
    \parbox{0.18\linewidth}{\centering 2\%}
    \parbox{0.18\linewidth}{\centering 4\%}
    \caption{Our topological offsets can handle a large variety of offset distances. Regions where the offset distance was adapted are colored in red.}
    \label{fig:topo-offsets-with-varying-distances}
\end{figure*}

\subsubsection{Optimization}
\label{sec:off-opt}

Our optimization algorithm is composed of three steps, which are repeated until convergence: (1) we update a sizing field, (2) we modify the offset mesh, improving its quality and moving its vertices to the desired distance from the input, and (3) we modify the background mesh, increasing its quality. In all iterations, the meshes are modified using a set of local operations, following \cite{botsch2004remeshing}: our algorithm iterates passes of splits, collapse, swaps, and vertex relocation. These operations are always performed on the background mesh: if the algorithm tries to split an edge of the offset mesh, this operation is applied to the corresponding edge of the background mesh or vice versa.
Note that average normal deviation and relative distance error converge very fast due to the conservative estimation (\Cref{fig:optimization-with-init}).

\paragraph{Invariants} 
The operations are executed only if their effect on the mesh satisfies the following invariants: (I1) the input surface and the boundary are not modified, (I2) the orientation of all tetrahedra in the background mesh is preserved (tested using the exact predicate in \cite{Shewchuck:1996}) and (I3) the topology of the offset and input surface is preserved \cite{Vivodtzev2010}. 

\begin{theorem}
Let $\M$ be a background mesh containing a simplicial embedding $S$ and an offset surface $O$. The mesh $\M'$ computed after performing any sequence of local operations satisfying the invariants I1, I2, and I3 contains a new surface $O'$ homeomorphic to $O$ that does not intersect $S$. 
\label{thm:op}
\end{theorem}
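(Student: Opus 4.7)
The plan is to prove this by induction on the length of the sequence of local operations. For the base case (empty sequence), we have $M' = M$ and $O' = O$, so the conclusion holds trivially: $O$ is homeomorphic to itself, and the hypothesis that $O$ is a topological offset of $S$ in $M$ implies $O \cap S = \emptyset$ by \Cref{def:topoffset}. For the inductive step, I would assume the statement holds after $k$ operations, yielding a mesh $M_k$ with an embedded surface $O_k$ homeomorphic to $O$ and disjoint from $S$, and then argue that performing one additional local operation satisfying I1, I2, I3 yields $M_{k+1}$ with a surface $O_{k+1}$ that is both homeomorphic to $O_k$ (hence to $O$) and disjoint from $S$.

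For homeomorphism preservation across a single operation, I would appeal directly to I3: the invariant is precisely the statement that each admissible local operation (split, collapse, swap, relocation) does not change the homeomorphism type of the offset surface. The technical content can be cited from \cite{Vivodtzev2010}, where the link condition characterizes exactly when a local operation preserves the topology of a multi-material simplicial complex, so transitivity of homeomorphism across the sequence yields $O_{k+1} \cong O_k \cong O$.

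The main obstacle, and the part needing real care, is showing $O_{k+1} \cap S = \emptyset$. Here I would combine I1 and I2. By I1, the simplices forming $S$ are never touched, so $S \subset M_{k+1}$ pointwise; moreover the boundary of $M$ is preserved, so no vertex escapes the domain. By I2, every tetrahedron in $M_{k+1}$ has positive volume, hence $M_{k+1}$ remains a valid (geometrically non-overlapping) simplicial embedding in $\mathbb{R}^3$ by a standard argument: a continuous deformation of a valid simplicial mesh through inversion-free intermediate states produces a valid mesh (the only way for two simplices to start overlapping is for a tetrahedron to degenerate through zero volume). Because $O_{k+1}$ and $S$ are both collections of faces of the valid tetrahedral mesh $M_{k+1}$, any intersection point must be a shared simplex. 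By I3 and the inductive hypothesis $O_k \cap S = \emptyset$, the local operation cannot create a shared simplex between $O_{k+1}$ and $S$ (such a creation would either merge the offset with the input surface, changing the topology of $O \cup S$, or require a degenerate tetrahedron in the transition, contradicting I2).

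Finally, I would combine the two conclusions $O_{k+1} \cong O$ and $O_{k+1} \cap S = \emptyset$ to close the induction. The delicate bookkeeping is really in the last paragraph: formalizing that inversion-free local operations on a valid simplicial mesh cannot introduce geometric overlaps between two initially disjoint sub-complexes whose incidence structure is preserved. This is essentially a discrete version of the fact that a continuous family of embeddings through non-degenerate configurations stays an embedding, which is why I2 (tested with the exact predicate of \cite{Shewchuck:1996}) is indispensable alongside the combinatorial invariants I1 and I3.
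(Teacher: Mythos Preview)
Your proposal is correct and follows the same two-part skeleton as the paper: homeomorphism from I3 via \cite{Vivodtzev2010}, and non-intersection from I1 and I2. The induction on the number of operations is harmless scaffolding that the paper leaves implicit.

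The one place where your route diverges is the non-intersection argument. You work combinatorially: the mesh stays a valid embedding, so $O'$ and $S$ could only meet in a shared simplex, and then you invoke I3 again to rule that out. The paper instead observes that I1 (fixed boundary and fixed $S$) together with I2 (positive orientation of every tetrahedron) are exactly the hypotheses of \cite{Lipman2014}, which yields a global continuous bijection between the points of $M$ and those of $M'$; since this bijection fixes $S$ pointwise and carries $O$ to $O'$, disjointness is immediate. This is both shorter and cleaner, and it avoids the slightly shaky step in your write-up where you appeal to a ``continuous deformation through inversion-free intermediate states'' for operations such as edge collapse that are not continuous deformations at all---that intuition is morally right, but the rigorous statement is precisely what \cite{Lipman2014} supplies. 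Your extra appeal to I3 in the non-intersection part is then unnecessary: I1 and I2 alone suffice.
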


\begin{proof}
$O'$ is homeomorphic to $O$ due to the explicit avoidance of operations changing its topology (I3), we refer to \cite{Vivodtzev2010} for details. $O'$ cannot intersect $S$ because I1 and I2 imply that there is always a continuous bijection between the points in $\M$ and $\M'$ \cite{Lipman2014}.
\end{proof}

It follows from \Cref{thm:op} that topological offsets are preserved by our mesh optimization.

\paragraph{Step 1: Sizing Field Update}
Edge split and collapse operations are driven by a sizing field that is defined on each edge of the offset mesh and is initialized with the current length of each edge. 

In each update pass, if one of the incident triangles has a shape regularity below $0.5$ or a normal deviation above the user-defined maximum $\sigma_\text{max}$, we divide the target length by two. 
If shape regularity is below $0.5$ and the normal deviation is above $\sigma_\text{min}$, we increase the target length by $1.5$. After this update, to keep the sizing field smooth, we cap the target length to $1.5$ times the length of any adjacent edge.

\paragraph{Step 2: Local operations}
Every edge that has a length greater than $4/3$ of its target length is split. The new vertex is positioned at the center of the edge. All edges that are shorter than $3/4$ of their target length are collapsed. We perform half-edge collapses. While there are no extra conditions for splits, a collapse is only performed if the user-defined maximum normal deviation is not exceeded. Both operations are scheduled according to the current edge length but for splits, long edges are prioritized, while for collapse, the short ones are considered first.
Edges are swapped if the operation increases the minimal triangle shape regularity (\Cref{subsec:quality_metrics}) of the two adjacent triangles. The swap is not performed if the normal deviation before the operation is below the user-defined maximum and would be above afterward. Long edges are prioritized in the swap operation.
We adapt the vertex relocation method proposed in \cite{zint2023feature} to work with a spatially varying offset distance. We compute the offset distance for a vertex as the area-weighted average of offset distances at the sample points of the adjacent triangles.
If the computed position would cause a tetrahedron from the background mesh to be flipped, we perform a binary search along the way to the computed position to find a valid position. If no valid position can be found, we do not move the vertex.

\Cref{fig:topo-offsets-with-varying-distances} shows different results of our topological offsets for different distances on 3 models from Thingi10k \cite{Thingi10K} where the distance was locally adapted (\Cref{tab:topo-offsets-examples-table} shows the statistics). 
All the depicted examples have an average relative distance error below $3\%$, an average normal deviation below $20^\circ$, and an average triangle shape regularity of at least $0.78$. The quality metrics are explained in \Cref{subsec:quality_metrics}.

\paragraph{Step 3: Embedding Optimization}
We use the optimization scheme that was presented in \cite{hu2018tetrahedral} with two minor modifications (in addition to the aforementioned invariants) to make it more efficient, as we are not interested in obtaining a background mesh of very high quality: we only want the background mesh to not hinder the movement of the offset. First, we trigger the update of the sizing field if the tetrahedron AMIPS energy is above 100 (instead of 8) and limit the target edge length to three times the length of any adjacent edge. Second, we only optimize the two-ring neighborhood of a tetrahedron with AMIPS above 100. 

\begin{figure}
    \centering\footnotesize
    \includegraphics[width=0.49\linewidth]{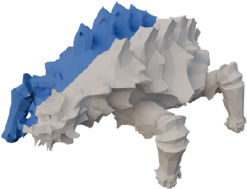}\hfill
    \includegraphics[width=0.49\linewidth]{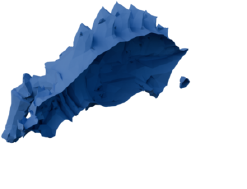}
    \parbox{0.49\linewidth}{\centering input}\hfill
    \parbox{0.49\linewidth}{\centering offset volume}
    \caption{Our method fails if an edge split during the offset initialization causes tetrahedra to invert due to numerical inaccuracy. In this example, we try to compute two offsets, one at $10^{-11}$ and a second one at $10^{-12}$.}
    \label{fig:failure-case}
\end{figure}

\subsection{Robustness and Failure Cases}
\label{sec:robustness}

We now analyze our algorithm from a robustness perspective, discussing precisely which guarantees it provides and what are the potential failures.

\paragraph{Step 1 and 2.} The decision of where to split is purely topological. However, the background mesh might flip after the split if an edge is shorter than the rounding error (Figure \ref{fig:failure-case}). This is not a practical concern, but a fully robust solution could be obtained by using a hybrid floating point/rational representation, following the same idea in \cite{hu2018tetrahedral}.

\paragraph{Step 3.} The invariants in the optimization are either purely topological (I1 and I3), or checked using exact predicates \cite{Shewchuck:1996} (I2). While there are no guarantees that the prescribed distance will be obtained, this step cannot fail and will always terminate after 10 iterations.

\paragraph{Properties}
The generated offsets are thus guaranteed to be homeomorphic to an infinitesimal offset, do not intersect the input simplicial complex, and their embedding is free of self-intersections. We show a large-scale validation of our implementation in \Cref{sec:results}. We note that using a tetrahedral background mesh makes these strong guarantees possible, enabling us to reduce challenging checks (self-intersections, topological correctness) to an exact \texttt{Orient3D} predicate. 

\section{Results}
\label{sec:results}

\begin{figure}
    \centering\footnotesize
    \parbox{.02\linewidth}{\rotatebox{90}{\centering Number of models}}\hfill
    \parbox{.968\linewidth}{\includegraphics[width=\linewidth,trim={0 1.45cm 0 0},clip]{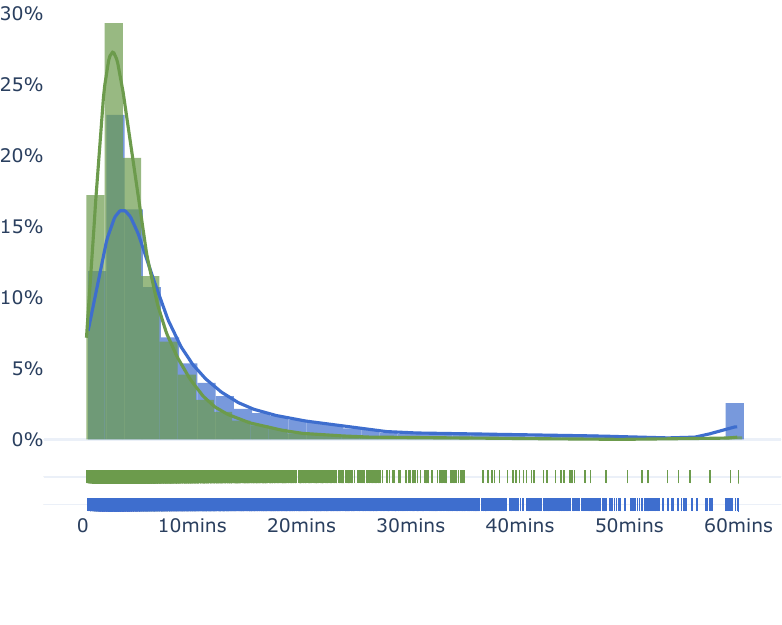}}
    \parbox{\linewidth}{\centering Runtime}\hfill
    \caption{Runtime of topological (blue) and finite (green) offsets.}
    \label{fig:thingi10k-runtime}
\end{figure}

We created topological offsets for the entire Thingi10k dataset~\cite{Thingi10K}, which was embedded in a background mesh using TetWild~\cite{hu2018tetrahedral} with default settings. We run our experiments on cluster nodes with a Xeon E5-2690 v2 @ 3.00GHz. We use the winding number to identify the outer part of the surface and create a one-sided offset. We skipped the 233 meshes where the winding number failed to identify a closed internal volume. This left $9\,767$ meshes in our experiments. We cap the runtime at 24 hours or stop when the convergence criteria from \Cref{sec:off-opt} (\textit{Termination}) are met. We compute offsets on the dataset with a target distance $\delta = 4\%$ relative to the bounding box size. We are not aware of any existing method that produces topological offsets, and therefore, we cannot directly compare with any previous work. To enable direct comparisons, we introduce a minor variant of our algorithm to create finite offsets: this variant is described and compared with state-of-the-art offset methods in Section (\Cref{subsec:finite-offsets}). 

\paragraph{Large-Scale Testing.} 
Our algorithm produces a valid topological offset, embedded in a valid background mesh, for all models of the dataset except for one, where no edge split could be performed without generating inverted elements. For more details on that matter, see \Cref{sec:robustness}. $55.24\%$ of the models finish within 6 minutes, and only 275 models (less than 3\%) take more than one hour (\Cref{fig:thingi10k-runtime}); these are highly complex models (\Cref{fig:slow}). 
The overall memory consumption is low, considering that our method processes tetrahedral meshes. $9\,764$ models use less than 16 GB of memory, and the remaining 3 use less than 64 GB.

\begin{figure}
    \centering
    \includegraphics[width=\linewidth]{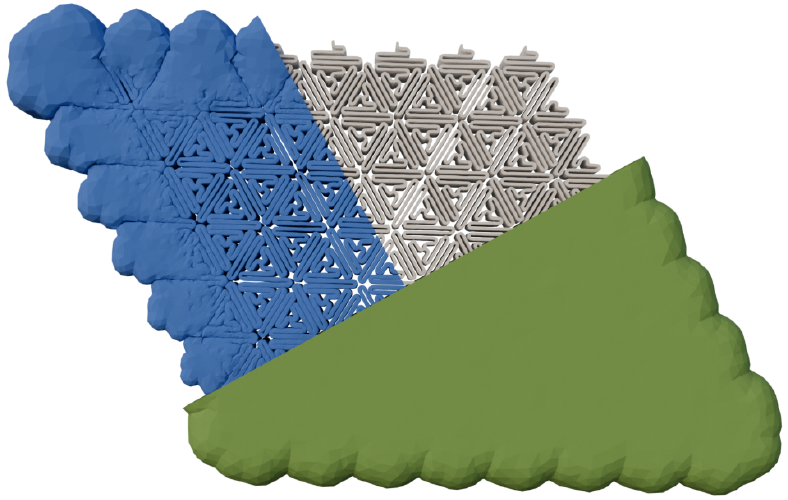}
    \caption{For models with high geometric fidelity, the topological offset (blue) contains way more triangles, whereas finite offsets (green) remove geometric details.}
    \label{fig:slow}
\end{figure}

\begin{figure}
    \centering\footnotesize
    \parbox{.02\linewidth}{\rotatebox{90}{\centering Number of models}}\hfill
    \parbox{.968\linewidth}{\includegraphics[width=\linewidth,trim={0 0.5cm 0 0},clip]{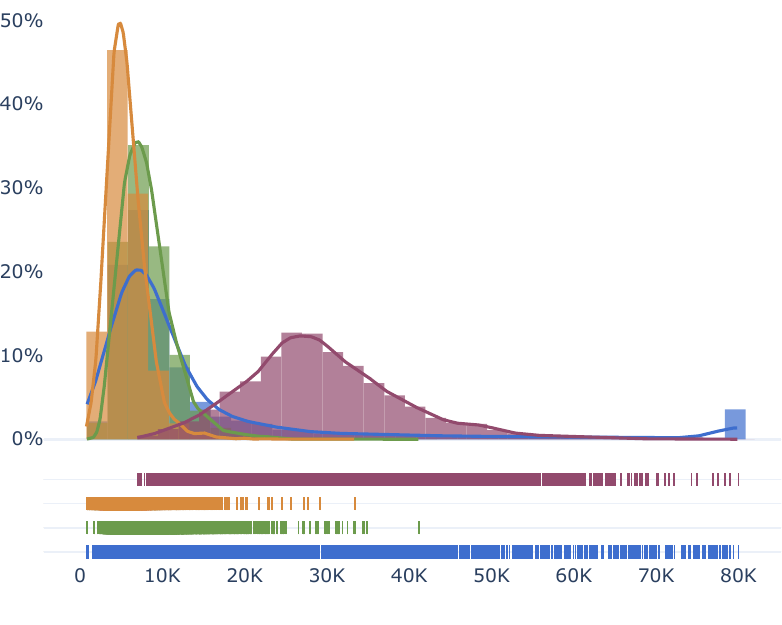}}
    \parbox{\linewidth}{\centering Number of triangles}\hfill
    \caption{Number of offset triangles of Alpha Wrapping (red), FPO (yellow), our finite offsets (green), and our topological offsets (blue).}
    \label{fig:thingi10k-triangle-number}
\end{figure}

\begin{figure}
    \centering\footnotesize
    \parbox{.02\linewidth}{\rotatebox{90}{\centering Number of models}}\hfill
    \parbox{.968\linewidth}{\includegraphics[width=\linewidth,trim={0 0.5cm 0 0},clip]{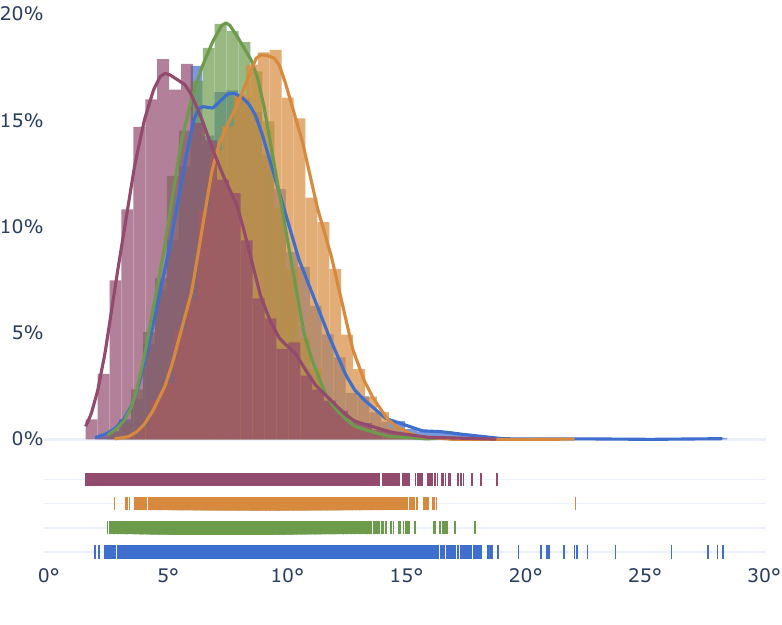}}
    \parbox{\linewidth}{\centering Average normal deviation}\hfill
    \caption{Average normal deviation of Alpha Wrapping (red), FPO (yellow), our finite offsets (green), and our topological offsets (blue).}
    \label{fig:thingi10k-normal-deviation}
\end{figure}

\begin{figure}
    \centering
    \includegraphics[width=0.49\linewidth]{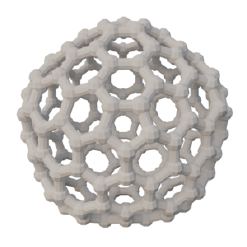}\hfill
    \includegraphics[width=0.49\linewidth]{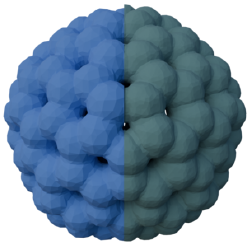}
    \caption{The minimal edge length can be adapted (teal) if the normal deviation is large (blue) for a given offset distance and model (white).}
    \label{fig:wire-sphere}
\end{figure}

\paragraph{Density and Quality} Our topological offsets can have many triangles (\Cref{fig:thingi10k-triangle-number}) as they always have the topology of an infinitesimal offset and, therefore, maintain the details present in the input (\Cref{fig:slow}). There are few results (\Cref{fig:thingi10k-normal-deviation}) with an average normal deviation larger than $20^\circ$ as the normal deviation error depends both on the edge length and the offset distance. Our experiments show that these are models where we adapt the offset distance everywhere. By reducing the edge length, we can achieve the target normal deviation even for the adapted offset distance (\Cref{fig:wire-sphere}).

\paragraph{Distance Adaptation}
The distance adaptation relies on a heuristic that might fail under certain circumstances, since the greedy expansion (\Cref{sec:optimization}) might overestimate the offset. The geometry of the estimation and the actual topological offset might be different, and in such a case, the distance might not be adapted properly. In \Cref{fig:clip}, we adapted the distance for the $2\%$ offset, while no adaptation was necessary for $7\%$. The greedy expansion of $4\%$ overestimates the geometry and is more similar to the one of the $7\%$ distance, where no adaptation is necessary; this leads to very close offsets but never intersecting, as we prohibit that with exact predicates (\Cref{sec:optimization}). Note that, for this example, the input and the offsets are of genus 0, so for large distances, the offset converges to a sphere.

\begin{figure}
    \centering\footnotesize
    \includegraphics[width=0.24\linewidth]{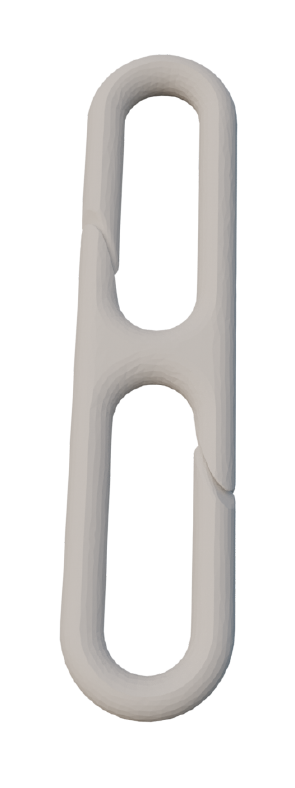}\hfill
    \includegraphics[width=0.24\linewidth]{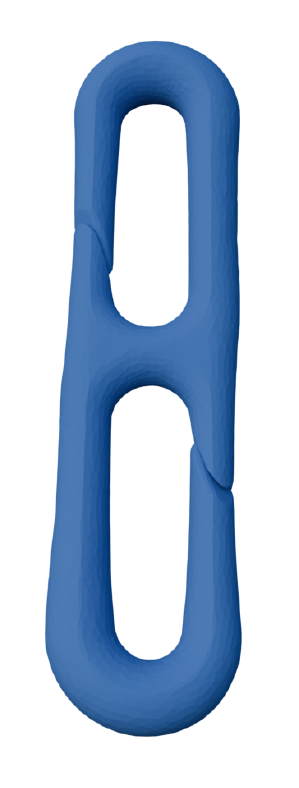}\hfill
    \includegraphics[width=0.24\linewidth]{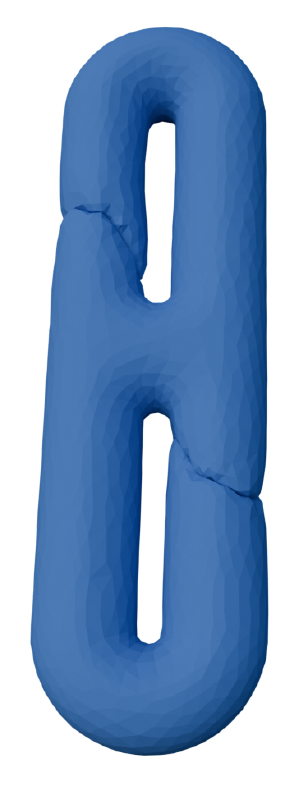}\hfill
    \includegraphics[width=0.24\linewidth]{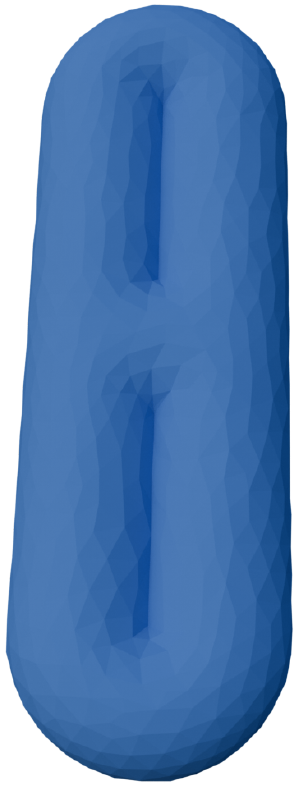}
    \parbox{0.24\linewidth}{\centering input}\hfill
    \parbox{0.24\linewidth}{\centering 2\%}\hfill
    \parbox{0.24\linewidth}{\centering 4\%}\hfill
    \parbox{0.24\linewidth}{\centering 7\%}
    \caption{Topological offsets might be very close (but never intersecting) as the local offset distance adaptation might be inaccurate.}
    \label{fig:clip}
\end{figure}

\section{Applications}

We present several applications of our topological offset: an algorithmic variant for producing finite offsets (\Cref{subsec:finite-offsets}), construction of multiple offset layers which guarantees that the offsets are intersection-free (\Cref{subsec:layered-offsets}), 
and the use of topological offsets to remove non-manifold simplices (\Cref{subsec:manifold-extraction}).

\begin{figure}
    \centering
    \includegraphics[width=\linewidth]{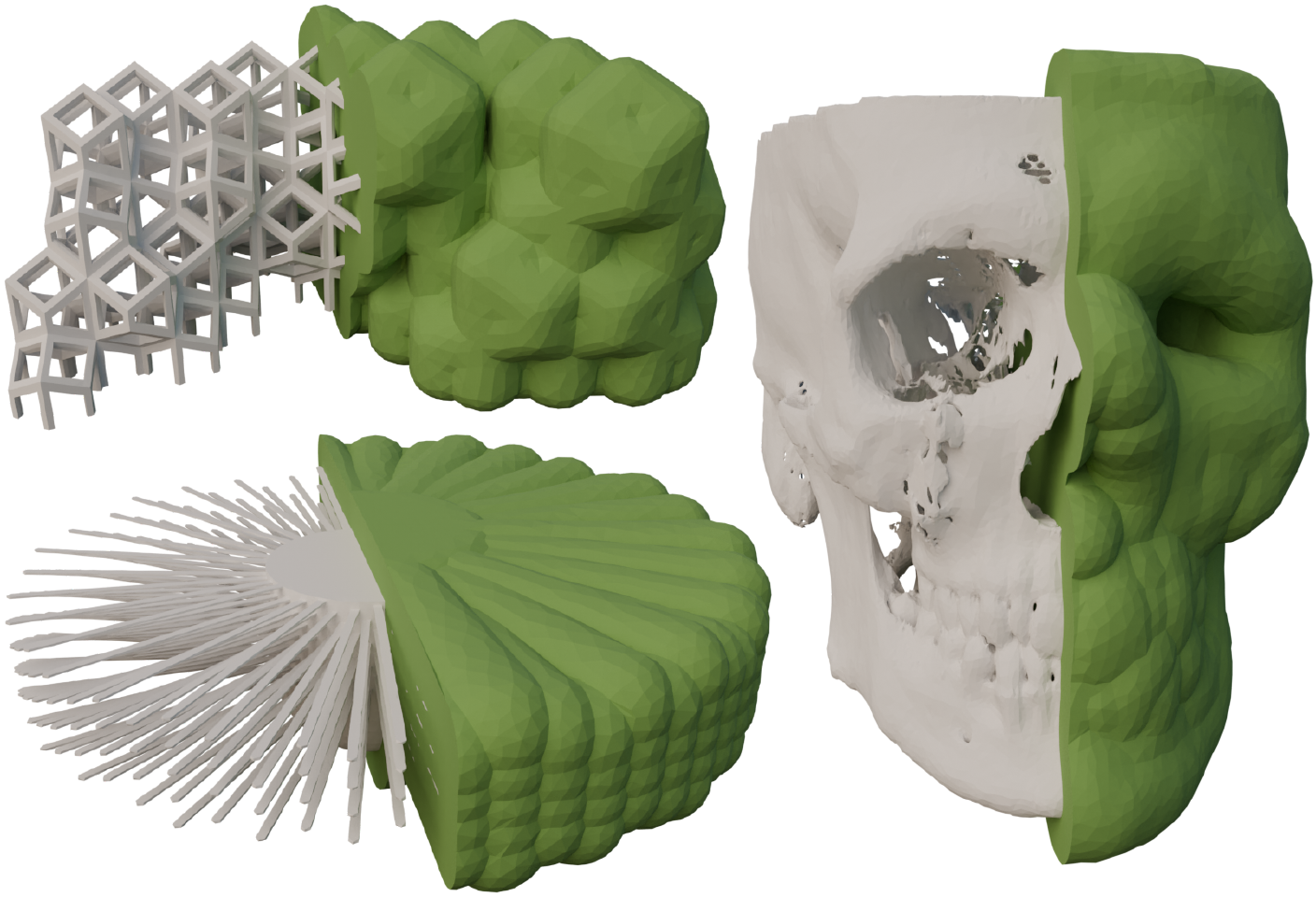}
    \caption{With a small modification, our offsets become finite (like those produced by other methods). Our finite offsets (like our topological ones) are guaranteed to enclose the input and are manifold and self-intersection-free.}
    \label{fig:finite_examples}
\end{figure}

\subsection{Finite Offsets}
\label{subsec:finite-offsets}
We can adapt our algorithm to produce finite offsets with minor modifications (\Cref{fig:finite_examples}). Note that in this case our algorithm does not produce an offset homeomorphic to an infinitesimal one and, similarly to all other finite offset methods, is not guaranteed to compute an offset with the same topology as the exact finite offset with the prescribed distance, it is only an approximation. However, our finite offsets are still manifold, self-intersection-free, and enclose the input.

For finite offsets, we expand the topological offset (\Cref{sec:conservative-estimation}) without any topological condition. After we grow the region to the desired distance, we generate another topological offset around it so that we can ensure that the offset is manifold, even if the expanded offset volume is not. While the topology of the topological offsets is independent of the resolution of the embedding, this is not true for finite offsets.
We compute finite offsets for the entire Thingi10k dataset and observe that the running time is reduced compared to topological offsets due to the reduced geometric complexity (\Cref{fig:thingi10k-runtime}); $63.7\%$ of the models finish within 5 minutes. Just 21 models ($0.2\%$) take more than one hour. None of our finite offsets has a normal deviation larger than $20^\circ$ (\Cref{fig:thingi10k-normal-deviation}). 
 
\subsubsection{Comparison}
\label{subsec:comparison}

We compare our finite offsets (\Cref{subsec:finite-offsets}) with Feature-Preserving Offsets (FPO)~\cite{zint2023feature}, and 3D Alpha Wrapping in CGAL~\cite{cgal:achpr-aw3-24a}. We run FPO with default parameters except for the normal deviation which we set to $15^\circ$. For Alpha Wrapping, we set $\alpha = \delta / 5$, which he chose to produce similar results.
Note that Alpha Wrapping is not an offsetting method and, therefore, does not claim to be feature-preserving or topologically correct. We chose this method to compare against because it comes with similar guarantees to ours (watertight, orientable, and strictly contains the input).

\paragraph{Normal Deviation}
All four methods produce meshes with similar average normal deviation (\Cref{fig:thingi10k-normal-deviation}). Alpha Wrapping performs slightly better, which can be explained by the larger number of triangles (\Cref{fig:thingi10k-triangle-number}).

\begin{figure}
    \centering\footnotesize
    \parbox{.02\linewidth}{\rotatebox{90}{\centering Number of models}}\hfill
    \parbox{.968\linewidth}{\includegraphics[width=\linewidth,trim={0 0.5cm 0 0},clip]{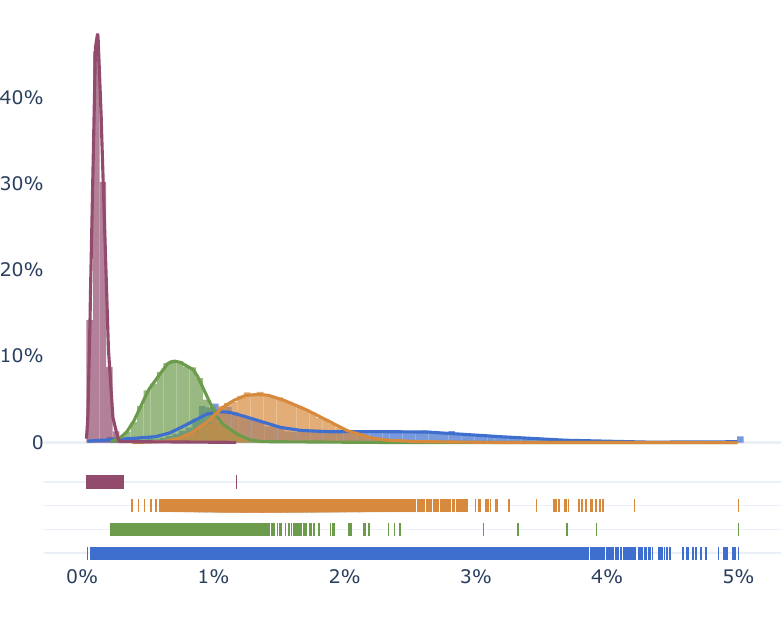}}
    \parbox{\linewidth}{\centering Relative average distance error}\hfill
    \caption{Relative average distance error of Alpha Wrapping (red), FPO (yellow), our finite offsets (green), and our topological offsets (blue).}
    \label{fig:thingi10k-distance}
\end{figure}

\paragraph{Offset Distance}
We report the average distance error relative to the user-defined target distance in \Cref{fig:thingi10k-distance}. Again, Alpha Wrapping (red) performs best, but it also has the largest amount of triangles, which influences the metric. Our method is adaptive to curvature and therefore places more triangles where normal deviation is high, while Alpha Wrapping has a uniform triangle distribution and therefore more triangles in flat regions with zero normal deviation.

\begin{figure}
    \centering\footnotesize
    \includegraphics[width=0.99\linewidth]{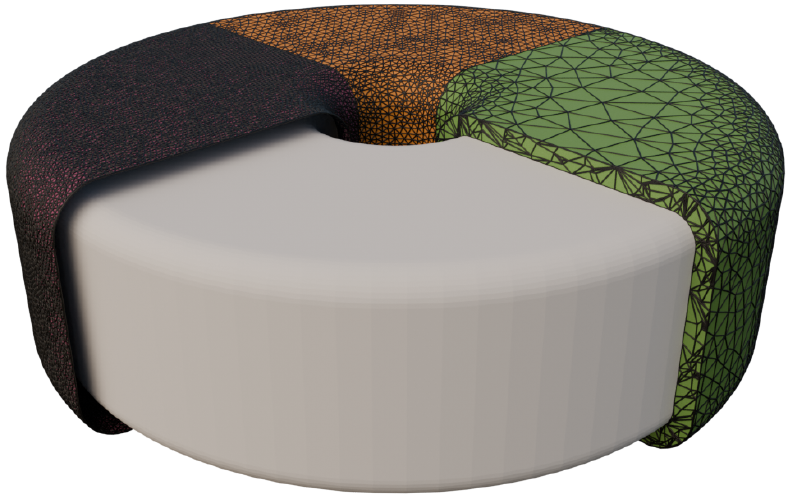}
    \parbox{.32\linewidth}{\centering \cite{cgal:achpr-aw3-24a}}\hfill
    \parbox{.32\linewidth}{\centering \cite{zint2023feature}}\hfill
    \parbox{.32\linewidth}{\centering ours}
    \caption{All three methods perform equally well on simple models. \cite{cgal:achpr-aw3-24a} in red, \cite{zint2023feature} in orange, and ours in green.}
    \label{fig:comparison_simple_model}
\end{figure}

\begin{figure}
    \centering\footnotesize 
    \includegraphics[width=0.32\linewidth]{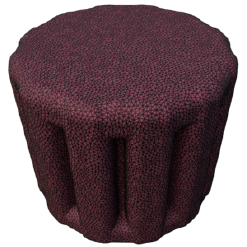}\hfill
    \includegraphics[width=0.32\linewidth]{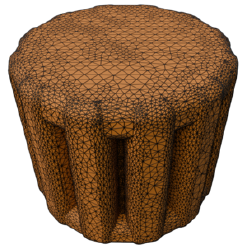}\hfill
    \includegraphics[width=0.32\linewidth]{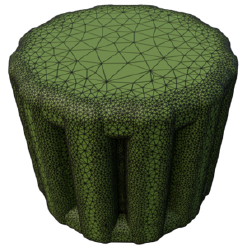}
    \parbox{.32\linewidth}{\centering\cite{cgal:achpr-aw3-24a}}\hfill
    \parbox{.32\linewidth}{\centering \cite{zint2023feature}}\hfill
    \parbox{.32\linewidth}{\centering ours}
    \caption{Our method is the only one with a varying sizing field.}
    \label{fig:comparison_adaptivity}
\end{figure}

\begin{figure}
    \centering\footnotesize
    \includegraphics[width=\linewidth]{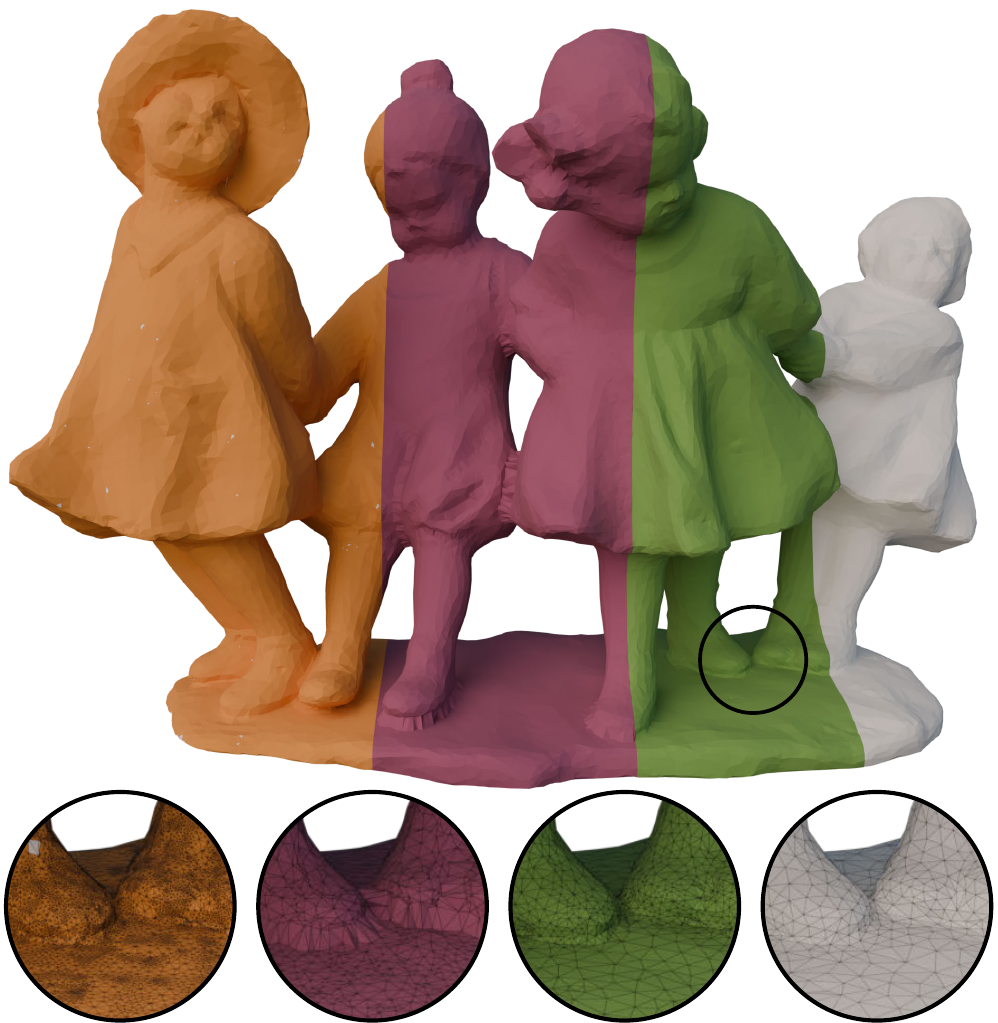}
    \parbox{.24\linewidth}{\centering \cite{zint2023feature}}\hfill
    \parbox{.24\linewidth}{\centering \cite{cgal:achpr-aw3-24a}}\hfill
    \parbox{.24\linewidth}{\centering ours}\hfill
    \parbox{.24\linewidth}{\centering input}
    \caption{Our method (green) is well suited for small offset distances. We use $\delta = 0.01\% $ and set $\ell_\text{min}$ to the average input edge length (white) for ours and FPO (orange). For Alpha Wrapping (red), we set $\alpha = 50 \delta$ to achieve comparable edge lengths.}
    \label{fig:small-distance}
\end{figure}

\begin{figure}
    \centering
    \includegraphics[width=0.32\linewidth]{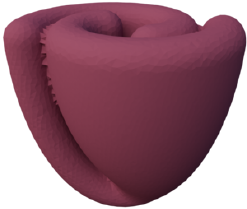}\hfill
    \includegraphics[width=0.32\linewidth]{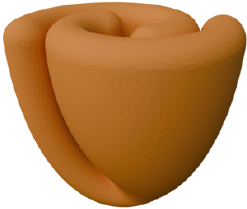}\hfill
    \includegraphics[width=0.32\linewidth]{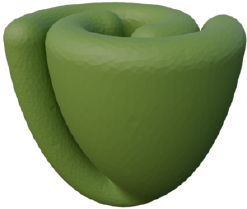}

    \includegraphics[width=0.32\linewidth]{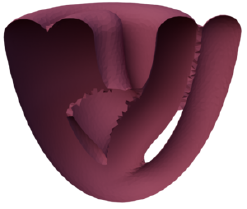}\hfill
    \includegraphics[width=0.32\linewidth]{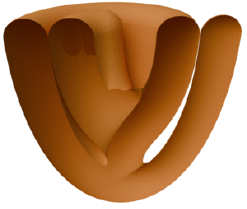}\hfill
    \includegraphics[width=0.32\linewidth]{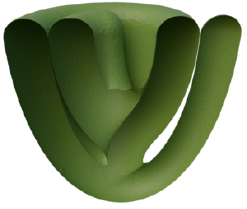}
    \parbox{.32\linewidth}{\centering\footnotesize \cite{cgal:achpr-aw3-24a}}\hfill
    \parbox{.32\linewidth}{\centering\footnotesize \cite{zint2023feature}}\hfill
    \parbox{.32\linewidth}{\centering\footnotesize ours}
    \caption{Offsets generated with different methods.}
    \label{fig:comparison_touching_offsets}
\end{figure}

\paragraph{Qualitative Comparison}
On simple models, like the one in \Cref{fig:comparison_simple_model}, Alpha Wrapping, FPO, and our finite offsets perform equally well. However, our method requires significantly more time to produce similar results. Alpha Wrapping and FPO finish in 2.6 and 33.7 seconds, respectively, while our method needs 242 seconds. This overhead is caused by the background mesh that needs to be updated. Our method returns a tetrahedral mesh in which the input and offset are embedded, while the others only generate surfaces.

Our method uses a sizing field to be adaptive to the offset curvature. Flat regions are not unnecessarily refined (\Cref{fig:comparison_adaptivity}) while the mesh is denser in regions with high curvature compared to the meshes from the other methods.

Our method is well suited for small offset distances (\Cref{fig:small-distance}). In this example, we set   $\ell_\text{min}$ to the input average edge length to avoid unnecessary refinement. Alpha Wrapping introduces artifacts in convex regions (we picked $\alpha$ so that it creates a similar number of triangles as the input). For such a small distance, FPO produces visible intersections (white speckles in \Cref{fig:small-distance}) and overrefines the offset even with the minimal edge length restricted. 

An extreme challenge for offset methods is when two offsets are almost colliding (\Cref{fig:comparison_touching_offsets}).
Alpha Wrapping cannot enter the thin area in between the two offsets. FPO produces a good-looking result from the outside but generates self-intersections on the interior. Our method produces the desired outcome and is free of self-intersections.

\begin{figure}
    \centering
    \includegraphics[width=0.49\linewidth]{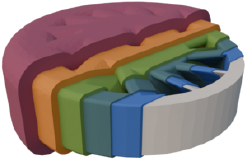}
    \includegraphics[width=0.49\linewidth]{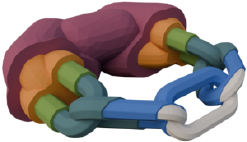}
    \caption{Layered finite offsets at 8\%, 4\%, 2\%, 1\%, and 0.5\% of the bounding box size.}
    \label{fig:gear_layered_offset}
\end{figure}

\subsection{Layered Offsets}
\label{subsec:layered-offsets}
We can construct multiple (finite or topological) non-intersecting offset layers by adding a simple condition to the offset initialization (\Cref{fig:gear_layered_offset}).
The outermost layer must be generated first. For the next layer, the offset volume expansion (\Cref{sec:conservative-estimation}) must not touch any previously generated layer. This guarantees that all layers are intersection-free.

\subsection{Manifold Extraction}
\label{subsec:manifold-extraction}
A common way to remove non-manifold vertices is to duplicate them and displace them in the opposite normal direction. While this method is simple, it comes with several drawbacks. First, it is not guaranteed that a valid normal direction always exists, and therefore it can fail for certain scenarios. Second, if the mesh was embedded in a background mesh, the region around the duplicated vertex needs to be remeshed. For an in-depth discussion of related work, we refer the reader to \cite{attene2009on}.

\begin{figure}
    \centering\footnotesize
    \parbox{.02\linewidth}{\rotatebox{90}{\centering Number of models}}\hfill
    \parbox{.968\linewidth}{\includegraphics[width=\linewidth]{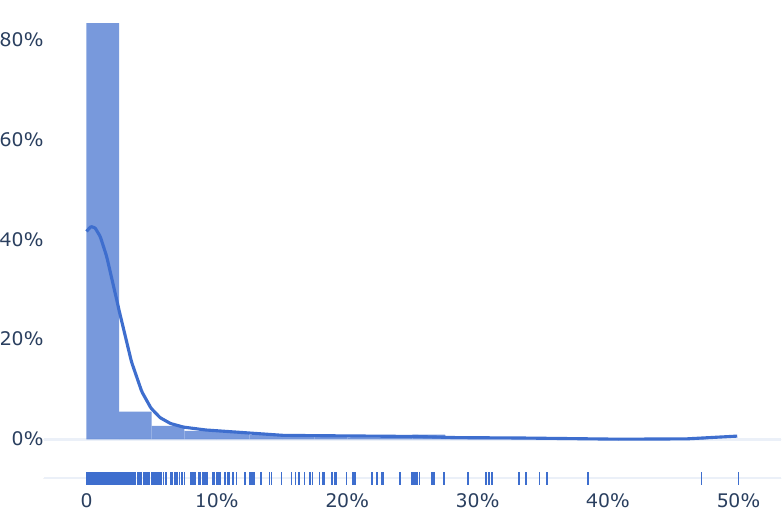}}
    \parbox{\linewidth}{\centering Relative increase of the number of triangles}\hfill
    \caption{Increase in the number of triangles required to make the meshes manifold.}
    \label{fig:manifold-stats}
\end{figure}

We can remove non-manifold simplices from a mesh by constructing a topological offset around them.  First, we detect all non-manifold simplices and consider them as the input for our topological offset.
Second, we remove all simplices from the input mesh within the offset region, obtaining a manifold mesh.

Our algorithm is guaranteed to generate manifold meshes and to keep the embedding valid; however, it might add unnecessary vertices to the surface. To mitigate this effect, we perform a ``clean-up'' after inserting the offset: we collapse edges within the region of the same non-manifold vertex. We only perform collapses that maintain a valid embedding.
Finally, we push all remaining vertices to a user-defined offset distance. If the desired distance would cause tetrahedral inversions, we use a binary search to find a valid position. If we cannot find a valid position, we keep the vertex where it is.

This method has similarities to one of the methods presented in \cite{attene2009on}: this method requires specifying a radius for the removal of the non-manifold points, and different choices of this radius will create different topologies. In contrast, our method is parameter-free and can be implemented robustly using floating-point computation only.

We run the non-manifold removal algorithm on the 1053 models from Thingi10k \cite{Thingi10K}, processed by TetWild \cite{hu2018tetrahedral}, whose surfaces are non-manifold. The algorithm succeeds for all models, i.e., there are no issues with floating point accuracy, no models contain inverted elements, and all surfaces do not intersect and are manifold. In \Cref{fig:manifold-stats}, we summarize the results: for 970 out of the 1053 models~(92\%), our method introduces less than 10\% more triangles on the surface. There are only 4 models where the element count doubles since almost all surface vertices and edges are non-manifold. \Cref{fig:manifold} shows how non-manifold vertices and edges are successfully removed. 

\begin{figure}
    \centering\footnotesize
    \includegraphics[width=0.49\linewidth]{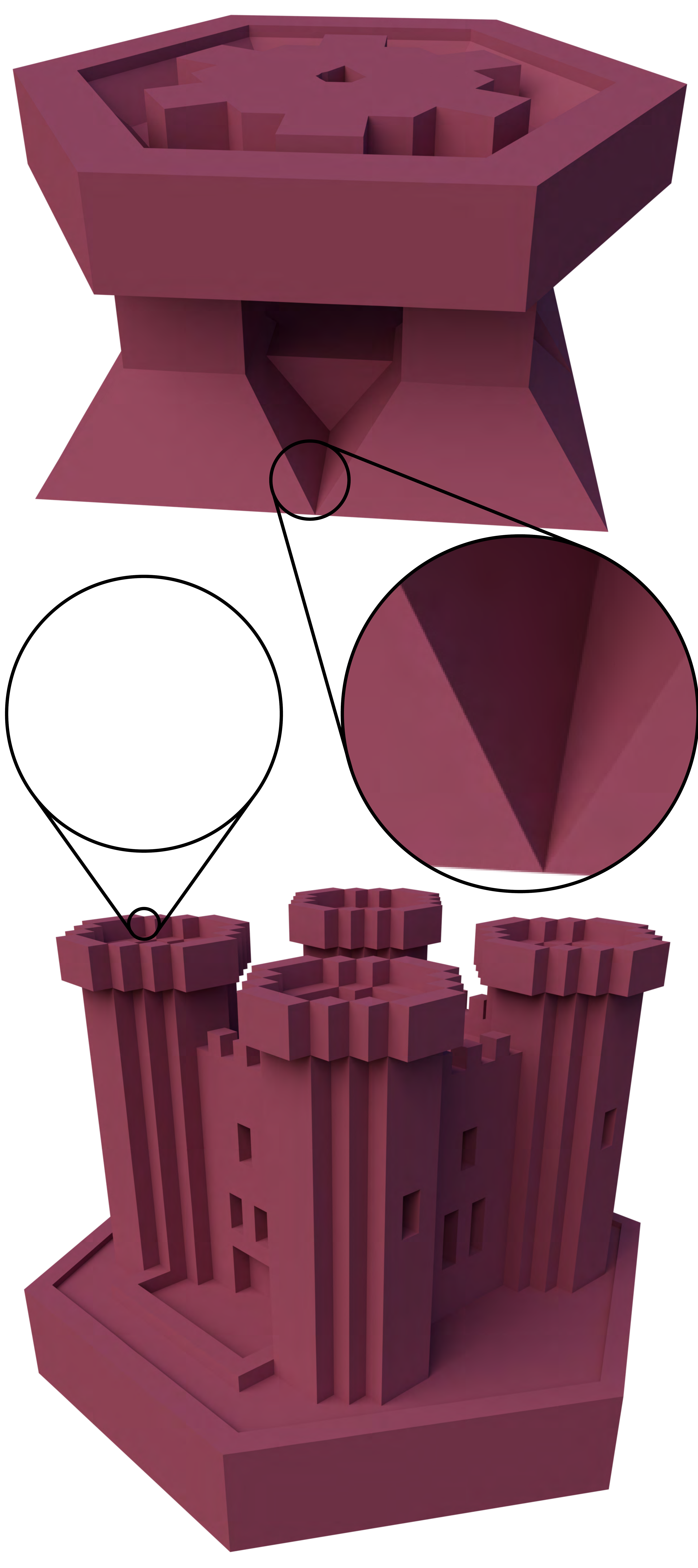}\hfill
    \includegraphics[width=0.49\linewidth]{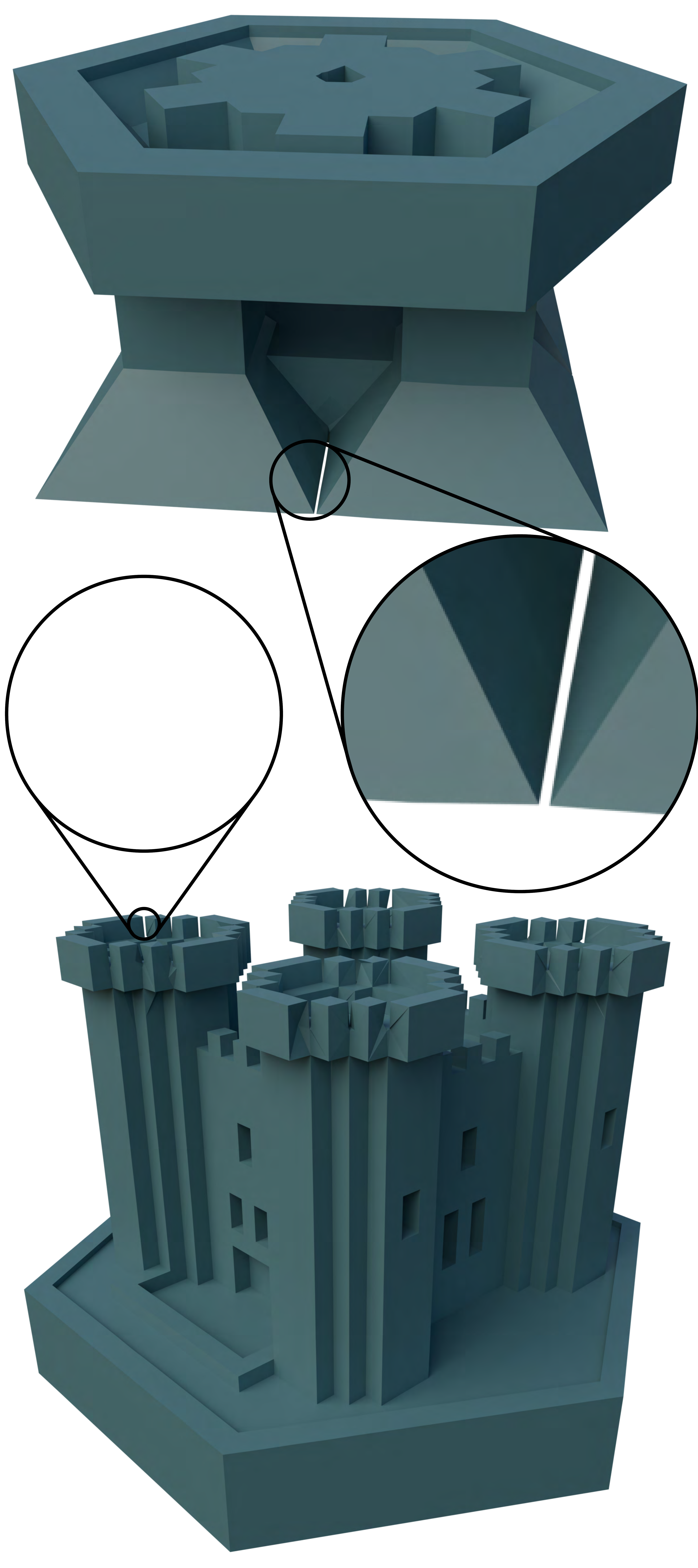}
    \parbox{0.49\linewidth}{\centering non-manifold input}\hfill
    \parbox{0.49\linewidth}{\centering manifold output}
    \caption{Example of removal of non-manifold regions using our algorithm.}
    \label{fig:manifold}
\end{figure}

\section{Concluding Remarks}
We introduced topological offsets, an algorithm for computing them robustly, and demonstrated their relevance in a wide range of graphics applications.

While computationally more expensive than competing finite offset methods, our algorithm generates the unique topology of an infinitesimal offset, 
and guarantees to produce self-intersection-free offsets that strictly enclose the input. 
We use these guarantees to extend our construction to multiple offsets (topological or finite) that inherit the same guarantees and therefore are strictly enclosing each other.

Exploring the use of this approach to create layered offsets (boundary layers) with exponentially increasing thickness and their use for fluid simulation is an exciting avenue for future work. Additionally, we plan to explore parallel or distributed mesh optimization methods to reduce the running time difference compared to other offset methods.

\begin{acks}
This work was supported in part through the NYU IT High Performance Computing resources, services, and staff expertise. This work was also partially supported by the NSF grants OAC-2411349 and IIS-2313156, the NSERC grants DGECR-2021-00461 and RG-PIN 2021-03707, as well as a gift from Adobe Research.
\end{acks}

\begin{table}
    \centering
    \caption{Statistics for the examples presented in \Cref{fig:topo-offsets-with-varying-distances}. The columns display from left to right: mesh name, target offset distance, runtime in seconds, number of embedding tetrahedra, number of offset triangles, average triangle shape regularity, average normal deviation, average relative distance error.}
    \begin{tabular}{|r||r|r|r|r|r|r|}
        \multicolumn{7}{l}{dalek (49874)} \\\hline
        $\delta$ & time (s) & $\#T$ & $\#t$ & $q_{sr,avg}$ & $\sigma_{avg}$ & $\epsilon_{avg}$ \\\hline
        -1\% & $1\,656$ & $   921\,789$ & $102\,086$ & 0.84 &  12 & 1.8\% \\
         1\% & $1\,524$ & $   960\,100$ & $109\,280$ & 0.87 &  11 & 1.2\% \\
         2\% & $1\,140$ & $   756\,723$ & $ 58\,300$ & 0.82 &  15 & 1.2\% \\
         4\% & $   875$ & $   720\,743$ & $ 43\,448$ & 0.78 &  18 & 1.0\% \\\hline
        \multicolumn{7}{l}{rooster (57680)} \\\hline
        $\delta$ & time (s) & $\#T$ & $\#t$ & $q_{sr,avg}$ & $\sigma_{avg}$ & $\epsilon_{avg}$ \\\hline
        -1\% & $1\,179$ & $   807\,647$ & $ 48\,844$ & 0.85 &  15 & 2.9\% \\
         1\% & $1\,138$ & $   860\,380$ & $ 58\,616$ & 0.87 &  14 & 1.7\% \\
         2\% & $1\,014$ & $   841\,894$ & $ 32\,936$ & 0.87 &  15 & 1.5\% \\
         4\% & $   956$ & $   836\,024$ & $ 14\,096$ & 0.85 &  17 & 1.7\% \\\hline
        \multicolumn{7}{l}{lamp (61258)} \\\hline
        $\delta$ & time (s) & $\#T$ & $\#t$ & $q_{sr,avg}$ & $\sigma_{avg}$ & $\epsilon_{avg}$ \\\hline
        -1\% & $3\,640$ & $1\,923\,308$ & $317\,462$ & 0.84 &  20 & 1.8\% \\
         1\% & $5\,104$ & $1\,941\,050$ & $273\,976$ & 0.87 &  14 & 1.6\% \\
         2\% & $3\,971$ & $1\,688\,744$ & $210\,854$ & 0.83 &  15 & 1.1\% \\
         4\% & $3\,773$ & $1\,541\,226$ & $215\,350$ & 0.82 &  15 & 0.6\% \\\hline
    \end{tabular}
    \label{tab:topo-offsets-examples-table}
\end{table}

\bibliographystyle{ACM-Reference-Format}
\bibliography{literature.bib}

\appendix
\section{Quality Metrics}
\label{subsec:quality_metrics}

Throughout the offset optimization, we use 3  quality metrics to test for convergence and to determine which operations to perform.
\begin{definition}[Triangle Shape Regularity]
    The shape regularity of a triangle $t$ as defined in \cite{bank1997mesh} is its area $A(t)$ multiplied by a normalization pre-factor of $4\sqrt{3}$ and divided by the sum of squared edge lengths,
    \begin{equation*}
        \label{eq:mean_ratio_metric}
        q_{sr}(t) = 4\sqrt{3} A(t) / (l_1^2 + l_2^2 + l_3^2).
    \end{equation*}
    The shape regularity is zero for a degenerate triangle and one for an equilateral one.
\end{definition}
\begin{definition}[Triangle Normal Deviation]
    The normal deviation $\sigma(t)$ of a triangle $t$ is the maximum angle between the offset normal $n(p)$ at the triangle center $p_c$ and the offset normal at any other point $p_i$ within the triangle, excluding its boundaries,
    \begin{equation*}
        \sigma(t) = \max_{p_i \in t}(\measuredangle(n(p_c), n(p_i)).
    \end{equation*}
\end{definition}
The offset normal can be computed for any point in space by finding the projection point on the offset and normalizing the vector from the point in space to its projection. We compute the normal deviation of a triangle by comparing $n(p_c)$ with offset normals close to the vertices $p_v$ of the triangle, more precisely at the positions $p_i = 0.1 p_c + 0.9 p_v$. This choice is motivated experimentally, as using more sampling points increases the running time with negligible improvements in quality.

\begin{definition}[Offset Distance Error]
    The offset distance error of a point on the offset mesh $O$ is the absolute value of the distance of that point to the surface $S$ minus the targeted offset distance $\delta$.
\end{definition}

\end{document}